\declaretheorem[name=Definition]{Definition}
\numberwithin{Definition}{section}
\declaretheorem[name=Property,sibling=Definition]{Property}
\numberwithin{Property}{section}
\numberwithin{Hypothesis}{section}
\numberwithin{Theorem}{section}
\declaretheorem[name=Corollary,sibling=Definition]{Corollary}
\numberwithin{Corollary}{section}
\declaretheorem[name=Example,sibling=Definition]{Example}
\numberwithin{Example}{section}
\numberwithin{Proposition}{section}
\declaretheorem[name=Lemma,sibling=Definition]{Lemma}
\numberwithin{Lemma}{section}
\numberwithin{Grammar}{section}
\numberwithin{Algorithm}{section}
\numberwithin{Assumption}{section}
\let\oldnl\nl%
\newcommand{\noln}{\renewcommand{\nl}{\let\nl\oldnl}}
\newcommand{\ignore}[1]{}
\renewcommand{\paragraph}[1]{\noindent\textbf{#1.}}
\definecolor{ao(english)}{rgb}{0.0, 0.5, 0.0}
\numberwithin{equation}{section}
\begin{document}
\ifthenelse{\boolean{final}}{
    \title{Equivalent Rewritings\\ on Path Views with Binding Patterns}}
    {\title{Equivalent Rewritings on Path Views\\ with Binding Patterns (Extended Version)}}

\author{Julien Romero\inst{1}
\and
Nicoleta Preda\inst{2} \and
Antoine Amarilli\inst{1} \and
Fabian Suchanek\inst{1}
}

\authorrunning{J. Romero et al.}
\institute{LTCI, Télécom Paris, Institut Polytechnique de Paris, France \email{first.last@telecom-paris.fr} \and
Université de Versailles  \email{nicoleta.preda@uvsq.fr}}

\maketitle              %
\begin{abstract}
	A view with a binding pattern is a parameterized query on a database. Such views are used, e.g., to model Web services. To answer a query on such views, the views have to be orchestrated together in execution plans. We show how queries can be rewritten into equivalent execution plans, which are guaranteed to deliver the same results as the query on all databases. We provide a correct and complete algorithm to find these plans for path views and atomic queries. Finally, we show that our method can be used to answer queries on real-world Web services.
\end{abstract}

\begin{figure*}
	\centering
	\tikzstyle arrowstyle=[blue,semitransparent,scale=2]
\tikzstyle basiclabel=[draw=none,fill=none,shape=rectangle,inner sep=2pt,scale=.8]
\tikzstyle leftlabel=[basiclabel,anchor=east]
\tikzstyle rightlabel=[basiclabel,anchor=west]
\tikzstyle bottomlabel=[basiclabel,anchor=north]
\tikzstyle toplabel=[basiclabel,anchor=south]

\definecolor{lightblue}{cmyk}{0.12,0,0,0.15} 
\tikzstyle{block} = [rectangle, draw, fill=white, 
text centered, rounded corners]

\definecolor{darkpastelgreen}{rgb}{0.01, 0.75, 0.24}
\definecolor{dartmouthgreen}{rgb}{0.05, 0.5, 0.06}
\definecolor{forestgreen}{rgb}{0.0, 0.27, 0.13}
\definecolor{lasallegreen}{rgb}{0.03, 0.47, 0.19}

\tikzstyle{headvar} = [rectangle, draw, fill=lightblue!70,
text centered, rounded corners]    
\tikzstyle{labelcase} = [rectangle,    text centered]
\tikzstyle{background}=[rectangle,   fill=lightblue!70,
inner sep=0.2cm,
rounded corners=5mm]

\tikzstyle{line} = [draw, -latex']
\tikzstyle{linenoarrow} = [draw]
\tikzstyle{invisibleline} = [-latex',sloped]
\tikzstyle{dashedline} = [draw, dashed]
\tikzstyle{inputtar} = [rectangle, draw, fill=lightblue!70, 
text centered, rounded corners]

\tikzstyle{inputvar} = [rectangle, draw=red,  fill=red!30,
text centered, rounded corners]

\begin{tikzpicture}[scale=0.9, transform shape]

\node [inputvar] (z) {\textit{Jailhouse}};
\node [block] (y) at ($(z)+(+3.7cm,0cm)$) {\textit{Jailhouse Rock}};
\node [block] (x) at ($(z)+(-3cm,0cm)$) {\textit{Elvis Presley}};
\node [block] (t) at ($(y)+(-0.5cm,-2.5cm)$){\textit{I Walk the Line}};
\path [line] (z) --node [above,align=center] {{\textit{onAlbum}} } (y);
\path [line] (x) --node [above,align=center] {{\textit{sang}} } (z);
\path [line] (z) --node [sloped,above, align=center] {{\textit{relatedAlbum}} } (t);

\coordinate (zu) at   ($(z) + (+0.6cm,+0.4cm) $); 
\coordinate (yu) at   ($(y) + (+0cm,+0.4cm) $); 
\draw [blue,->]     (zu) to  (yu);
\node [labelcase,blue] (lf1) at   ($0.5*(zu) + 0.5*(yu) + (0,+0.2cm)$) {\textit{getAlbum}};

\coordinate (xb) at   ($(x) + (+0cm,+0.8cm) $);
\coordinate (yb) at   ($(y) +  (+0cm,+0.8cm) $);
\draw [->,blue]    (yb) to (xb);
\node [labelcase,blue] (lf2) at   ($0.5*(xb) + 0.5*(yb) + (0cm,+0.19cm)$) {\textit{getAlbumDetails}};

\node [block] (z2) at ($(t)+(-5cm,0cm)$) {\textit{Folsom Prison Blues}};
\node [block] (x2) at ($(z2)+(-4cm,0cm)$) {\textit{Johnny Cash}};

\path [line] (z2) --node [above,align=center] {{\textit{onAlbum}} } (t);
\path [line] (x2) --node [above,align=center] {{\textit{sang}} } (z2);

\coordinate (xb2) at   ($(x2) + (+0cm,+0.4cm) $);
\coordinate (yb2) at   ($(t) +  (-0.9cm,+0.4cm) $);
\draw [->,lasallegreen]    (yb2) to (xb2);
\node [labelcase,lasallegreen] (lf22) at   ($0.5*(xb2) + 0.5*(yb2) + (0cm,+0.2cm)$) {\textit{getAlbumDetails}};

\coordinate (zr) at   ($(z) + (-0cm,-0.3cm) $); 
\coordinate (tr) at   ($(yb2) + (0cm,0cm) $); 
\path [line, lasallegreen] (zr) --node [sloped,below, align=center] {{\textit{getRelAlbum}} } (tr);

\end{tikzpicture} 
	\caption{An equivalent execution plan (blue) and a maximal contained rewriting (green) executed on a database (black). \label{fig:ex1}}\vspace{-5mm}
\end{figure*}
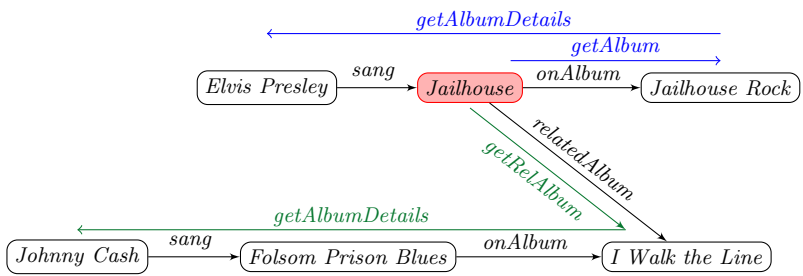

\section{Introduction}
In this paper, we study views with binding patterns~\cite{BINDING-PATTERNS-PODS1995}.
Intuitively, these can be seen as functions that, given input values, return output values from a database. For example, a function on a music database could take as input a musician, and return the songs by the musician stored in the database. 

Several databases on the Web can be accessed only through such functions. They are usually presented as a form or as a Web service. For a REST Web service, a client calls a function by accessing a parameterized URL, and it responds by sending back the results in an XML or JSON file. The advantage of such an interface is that it offers a simple way of accessing the data without downloading it. Furthermore, the functions allow the data provider to choose which data to expose, and under which conditions. For example, the data provider can allow only queries about a given entity, or limit the number of calls per minute. According to \href{http://programmableweb.com}{programmableWeb.com}, there are over 20,000 Web services of this form -- including LibraryThing, Amazon, TMDb, Musicbrainz, and Lastfm.

If we want to answer a user query on a database with such functions, we have to \emph{compose} them. For example, consider a database about music -- as shown in Figure~\ref{fig:ex1} in black. Assume that the user wants to find the musician of the song \emph{Jailhouse}. One way to answer this query is to call a function \emph{getAlbum}, which returns the album of the song. Then we can call \emph{getAlbumDetails}, which takes as input the album, and returns all songs on the album and their musicians. If we consider among these results only those with the song \emph{Jailhouse}, we obtain the musician \emph{Elvis Presley} (Figure~\ref{fig:ex1}, top, in blue). We will later see that, under certain conditions, this plan is guaranteed to return exactly all answers to the query on all databases: it is an \emph{equivalent rewriting} of the query.
This plan is in contrast to other possible plans, such as calling \emph{getRelatedAlbum} and \emph{getAlbumDetails} (Figure~\ref{fig:ex1}, bottom, in green). This plan does not return the exact set of query results. It is a \emph{maximally contained rewriting}, another form of rewriting, which we will discuss in the related work.

Equivalent rewritings are of primordial interest to the user because they allow obtaining exactly the answers to the query -- no matter what the database contains.
Equivalent rewritings are also of interest to the data provider: For example, in the interest of usability, the provider may want to make sure that equivalent plans can answer all queries of importance. 
However, finding equivalent rewritings is inherently non-trivial.  As observed in \cite{benedikt2015querying,benedikt2016generating}, the problem is undecidable in general. Indeed, plans can recursively call the same function. Thus, there is, a priori, no bound on the length of an execution plan. Hence, if there is no plan, an algorithm may try forever to find one -- which indeed happens in practice.%

In this paper, we focus on path functions (i.e., functions that form a sequence of relations) and atomic queries. For this scenario, we can give a correct and complete algorithm that decides in PTIME whether a query has an equivalent rewriting or not. If it has one, we can give a grammar that enumerates all of them. Finally, we show that our method can be used to answer queries on real-world Web services. 
After reviewing related work in Section~\ref{sec-rel} and preliminaries in Section~\ref{sec-prel}, we present our problem statement in Section~\ref{sec:problem} and our algorithm in Section~\ref{sec:thealgorithm}, concluding with experiments in Section~\ref{sec-experiments}.
\ifthenelse{\boolean{final}}{This paper is complemented by an extended version~\cite{romero2020equivalent} that contains the proofs for our theorems.}{This is an extended version of the conference paper which contains all detailed proofs in the appendix.}
\section{Related Work}\label{sec-rel}

Formally, we aim at computing \emph{equivalent rewritings} over views with binding patterns~\cite{BINDING-PATTERNS-PODS1995} in the presence of inclusion dependencies. Our approach relates to the following other works. %

\paragraph{Equivalent Rewritings} Checking if a query is \emph{determined} by views~\cite{gogacz2016red}, or finding possible equivalent rewritings of a query over views, is a task that has been intensively studied for query optimization~\cite{Dana-Halevy-OptBindingPatterns-99,benedikt2015querying}, under various classes of constraints. In our work, we are specifically interested in computing equivalent rewritings over views with binding patterns, i.e., restrictions on how the views can be accessed. This question has also been studied, in particular with the approach by Benedikt et al.~\cite{benedikt2016generating} based on logical interpolation, for very general classes of constraints. In our setting, we focus on path views and unary inclusion dependencies on binary relations.
This restricted (but practically relevant) language of functions and constraints has not been investigated in~\cite{benedikt2016generating}. We show 
that, in this context, the problem is solvable in PTIME. What is more, we provide a self-contained, effective algorithm for computing plans, for which we provide an implementation. We compare experimentally against the PDQ implementation by Benedikt et al.~\cite{benedikt2014pdq}\ in Section~\ref{sec-experiments}.

\paragraph{Maximally Contained Rewritings} Another line of work has studied how to rewrite queries against data sources in a way that is not equivalent but maximizes the number of query answers~\cite{HalevyAnswering2001}. Unlike equivalent rewritings, there is no guarantee that all answers are returned. 
For views with binding patterns, a first solution was proposed in~\cite{DuschkaL97,DuschkaGL00}.
The problem has also been studied for different query languages or under various constraints~\cite{CaliICDE2008, NashL04,CaliCM09,DeutschLN07}. We remark that by definition, the approach requires the generation of relevant but not-so-smart call compositions. These call compositions make sure that no answers are lost. 
Earlier work by some of the present authors proposed to prioritize promising function calls \cite{angie} or to complete the set of functions with new functions~\cite{susie}. In our case, however, we are concerned with identifying only those function compositions that are guaranteed to deliver answers.

\paragraph{Orthogonal Works} Several works study how to optimize given execution plans~\cite{Srivastava06,WEB_COMP_DATA_INTEGRATION_VLDB2005}. Our work, in contrast, aims at \emph{finding} such execution plans. Other works are concerned with mapping several functions onto the same schema~\cite{ontology-survey,TaheriyanKSA12,KoutrakiVP15}. Our approach takes a Local As View perspective, in which all functions are already formulated in the same schema.

\paragraph{Federated Databases}  Some works~\cite{QuilitzL08,SchwarteHHSS11} have studied \emph{federated databases}, where each source can be queried with any query from a predefined language. By contrast, our sources only publish a set of preset parameterized queries, and the abstraction for a Web service is a view with a binding pattern, hence, a predefined query with input parameters. Therefore, our setting is different from theirs, as we cannot send arbitrary queries to the data sources: we can only call these predefined functions.

\paragraph{Web Services} There are different types of Web services, and many of them are not (or cannot be) modeled as views with binding patterns. AJAX Web services use JavaScript to allow a Web page to contact the server. Other Web services are used to execute complex business processes~\cite{2012Deutsch} according to protocols or choreographies, often described in BPEL~\cite{bpel}.
The Web Services Description Language (WSDL) describes SOAP Web services.
The Web Services Modeling Ontology (WSMO)~\cite{WSDMO}, in the Web Ontology Language for Services (OWL-S)~\cite{MartinPMBMMPPSSSS04}, or in Description Logics (DL)~\cite{RaoKM04} can describe more complex services. These descriptions allow for Artificial Intelligence reasoning about Web services in terms of their behavior by explicitly declaring their preconditions and effects. Some works derive or enrich such descriptions automatically~\cite{PuHK06,DBLP:conf/icwe/CeriBB11,DBLP:conf/www/BozzonBC12} in order to facilitate Web service discovery.

In our work, we only study Web services that are querying interfaces to databases. These can be modeled as views with binding patterns and are typically implemented in the Representational State Transfer (REST) architecture, which does not provide a formal or semantic description of the functions.

\section{Preliminaries} \label{sec-prel}

\paragraph{Global Schema} 
We assume a set $\mathcal{C}$ of constants and a set $\mathcal{R}$ of relation names. 
We assume that all relations are binary, i.e., any $n$-ary relations have been encoded as binary relations by introducing additional constants\footnote{\url{https://www.w3.org/TR/swbp-n-aryRelations/}}.
A \textit{fact} $r(a, b)$ is formed using a relation name $r \in \mathcal{R}$ and two constants $a, b \in \mathcal{C}$. A
\textit{database instance} $I$, or simply \textit{instance}, is a set of facts. For $r\in \mathcal{R}$, we will use $r^-$ as a relation name to mean the inverse of $r$, i.e., $r^-(b, a)$ stands for $r(a, b)$. More precisely, we see the inverse relations $r^-$ for $r\in\mathcal{R}$ as being relation names in $\mathcal{R}$, and we assume that, for any instance $I$, the facts of $I$ involving the relation name $r^-$ are always precisely the facts $r^-(b,a)$ such that $r(a,b)$ is in $I$.

\paragraph{Inclusion Dependencies} %
A \emph{unary inclusion dependency} for two relations $r, s$, which we write $r\leadsto s$,
is the following constraint:
\[ \forall x, y: r(x, y) \Rightarrow \exists z: s(x, z)\]
Note that one of the two relations or both may be inverses.
In the following, we will assume a fixed set $\mathcal{UID}$ of unary inclusion dependencies, and we will only consider instances that satisfy these inclusion dependencies. We assume that $\mathcal{UID}$ is closed under implication, i.e., if $r \leadsto s$ and $s \leadsto t$ are two inclusion dependencies in $\mathcal{UID}$, then so is $r \leadsto t$. 

\paragraph{Queries} An \textit{atom} $r(\alpha,\beta)$ is formed with a relation name $r \in \mathcal{R}$ and $\alpha$ and $\beta$ being either constants or variables. A \emph{query} takes the form
\[q(\alpha_1,...,\alpha_m) \leftarrow  B_1,...,B_n\]
where $\alpha_1,...\alpha_m$ are variables, each of which must appear in at least one of the body atoms $B_1,...B_n$. We assume that queries are \emph{connected}, i.e., each body atom must be transitively linked to every other body atom by shared variables.
An \emph{embedding} for a query $q$ on a database instance $I$ is a substitution $\sigma$ for the variables of the body atoms so that $\forall B\in\{B_1,...,B_n\}: \sigma(B)\in I$. A \emph{result} of a query is an embedding projected to the variables of the head atom. We write $q(\alpha_1,...,\alpha_m)(I)$ for the results of the query on $I$. An \emph{atomic query} is a query that takes the form $q(x) \leftarrow r(a, x)$, where $a$ is a constant and $x$ is a variable.

\paragraph{Functions} We model functions as views with binding patterns~\cite{BINDING-PATTERNS-PODS1995}, namely:
\[f(\underline{x}, y_1,...,y_m) \leftarrow  B_1,...,B_n\]
Here, $f$ is the function name, $x$ is the \emph{input variable} (which we underline), $y_1,...,y_m$ are the \emph{output variables}, and any other variables of the body atoms are \emph{existential variables}. 
In this paper, we are concerned with \emph{path functions}, where the body atoms are ordered in a sequence $r_1(\underline{x}, x_1), r_2(x_1,x_2),...,r_n(x_{n-1}, x_n)$, the first variable of the first atom is the input of the plan, the second variable of each atom is the first variable of its successor, and the output variables are ordered in the same way as the atoms.

\begin{Example}\label{ex1} Consider again our example in Figure~\ref{fig:ex1}. There are 3 relations names in the database: \textit{onAlbum}, \textit{sang}, and \textit{rel\-Album}. The relation \textit{rel\-Album} links a song to a related album. The functions are:
\begin{align*}
\textit{getAlbum}(\underline{s},a)  & \leftarrow onAlbum(\underline{s}, a) \\[-.15em]
\textit{getAlbumDetails}(\underline{a},s,m)  & \leftarrow onAlbum^-(\underline{a},s), sang^-(s,m) \\
\textit{getRelAlbum}(\underline{s}, a)  & \leftarrow relAlbum(\underline{s},a)
\end{align*} The first function takes as input a song $s$, and returns as output the album $a$ of the song. The second function takes as input an album $a$ and returns the songs $s$ with their musicians $m$. The last function returns the related albums of a song. 
\end{Example}

\paragraph{Execution Plans}
Our goal in this work is to study when we can evaluate an atomic query on an instance using a set of path functions, which we will do using \emph{plans}. Formally,
a \emph{plan} is a finite sequence $\pi_a(x) = c_1, \ldots, c_n$ of \emph{function calls}, where $a$ is a constant, $x$ is the output variable. Each function call $c_i$ is of the form $f(\underline{\alpha}, \beta_1, \ldots, \beta_n)$, where $f$ is a function name, where the input $\alpha$ is either a constant or a variable occurring in some call in $c_1, \ldots, c_{i-1}$, and where the outputs $\beta_1, \ldots, \beta_n$ are either variables or constants.
A \emph{filter} in a plan is the use of a constant in one of the outputs $\beta_i$ of a function call; if the plan has none, then we call it \emph{unfiltered}.
The \textit{semantics} of the plan is the query:
\[
q(x) \leftarrow \phi(c_1), \ldots, \phi(c_n)
\]
where each $\phi(c_i)$ is the body of the query defining the function $f$ of the call~$c_i$ in which we have substituted the constants and variables used in $c_i$, where we have used fresh existential variables across the different $\phi(c_i)$, and where $x$ is the output variable of the plan.

To \emph{evaluate} a plan on an instance means running the query above. Given an execution plan $\pi_a$ and a database $I$, we call $\pi_a(I)$ the answers of the plan on $I$. In practice, evaluating the plan means calling the functions in the order given by the plan. If a call fails, it can potentially remove one or all answers of the plan. More precisely, for a given instance $I$, the results $b \in \pi_a(I)$ are precisely the elements $b$ to which we can bind the output variable when matching the semantics of the plan on~$I$.
For example, let us consider a function $f(\underline{x}, y) = r(x, y)$ and a plan $\pi_a(x) = f(a, x), f(b, y)$. This plan returns the answer $a'$ on the instance $I = \{r(a, a'), r(b, b')\}$, and returns no answer on $I' = \{r(a, a')\}$.

\begin{Example}\label{ex2} The following is an execution plan for Example~\ref{ex1}:
\begin{align*}
\pi_{Jailhouse}(m) = getAlbum(\underline{Jailhouse},a), getAlbumDetails(\underline{a},\textit{Jailhouse},m)
\end{align*}
The first element is a function call to \textit{getAlbum} with the constant \textit{Jailhouse} as input, and the variable $a$ as output. The variable $a$ then serves as input in the second function call to \textit{get\-Album\-Details}. The plan is shown in Figure~\ref{fig:ex1} on page~\pageref{fig:ex1} with an example instance. This plan defines the query:
\[
onAlbum(Jailhouse,a), onAlbum^-(a, Jailhouse), sang^-(Jailhouse,m)
\]
\noindent For our example instance, we have the  embedding:
\[\sigma=\{a=JailhouseRock, m=Elvis Presley\}.
\]
\end{Example}

\paragraph{Atomic Query Rewriting}
Our goal is to determine when a given atomic query $q(x)$ can be evaluated as a plan $\pi_a(x)$.
Formally, we say that $\pi_a(x)$ is a \emph{rewriting} (or an \emph{equivalent plan}) of the query $q(x)$ if, for any database instance $I$ satisfying the inclusion dependencies $\mathcal{UID}$, the result of the plan $\pi_a$ is equal to the result of the query $q$ on~$I$.

\section{Problem Statement and Main Results}
\label{sec:problem}

The goal of this paper is to determine when a query admits a rewriting under the inclusion dependencies. If so, we compute a rewriting. In this section, we present our main high-level results for this task. We then describe in the next section (Section~\ref{sec:thealgorithm}) the algorithm that we use to achieve these results, and show in Section~\ref{sec-experiments} our experimental results on an implementation of this algorithm.

Remember that we study \textit{atomic} queries, e.g., $q(x) \leftarrow r(a, x)$, that we study plans on a set $\mathcal{F}$ of path functions, and that we assume that the data satisfy integrity constraints given as a set $\mathcal{UID}$ of \emph{unary inclusion dependencies}. In this section, we first introduce the notion of \emph{non-redundant plans}, which are a specific class of plans that we study throughout the paper; and we then state our results about finding rewritings that are non-redundant plans.

\subsection{Non-redundant plans}
\label{sec:nonredundant}

Our goal in this section is to restrict to a well-behaved subset of plans that are \emph{non-redundant}.
Intuitively, a \emph{redundant plan} is a plan that contains function calls that are not useful to get the output of the plan. For example, if we add the function call $getAlbum(m, a')$ to the plan in Example~\ref{ex2}, then this is a redundant call that does not change the result of $\pi_{Jailhouse}$. We also call \emph{redundant} the calls that are used to remove some of the answers, e.g., for the function $f(\underline{x}, y) = r(x, y)$ and the plan $\pi_a(x) = f(a, x), f(b, y)$ presented before, the second call is redundant because it does not contribute to the output (but can filter out some results).
Formally:

\begin{Definition}[Redundant plan]
An execution plan $\pi_a(x)$ is \emph{redundant} if it has no call using the constant $a$ as input, or if it contains a call where none of the outputs is an output of the plan or an input to another call. If the plan does not satisfy these conditions, it is \emph{non-redundant}. 
\end{Definition}

Non-redundant plans can easily be reformulated to have a more convenient shape: the first call uses the input value as its input, and each subsequent call uses as its input a variable that was an output of the previous call. Formally:

\begin{restatable}{Property}{corollarysequence}\label{property-call-sequence}
The function calls of any non-redundant plan $\pi_a(x)$ can be organized in a sequence $c_0, c_1,  \ldots, c_k$ such that the input of $c_0$ is the constant $a$, every other call $c_i$ takes as input an output variable of the previous call $c_{i-1}$, 
and the output of the plan is in the call $c_k$.
\end{restatable}

Non-redundant plans seem less potent than redundant plans, because
they cannot, e.g., filter the outputs of a call based on whether some other call is successful. However, as it turns out, we can restrict our study to non-redundant plans without loss of generality, which we do in the remainder of the paper.

\begin{restatable}{Property}{propertyredundantplan}\label{prop-redundant}
    For any redundant plan $\pi_a(x)$ that is a rewriting to an atomic query $q(x) \leftarrow r(a,x)$, a subset of its calls forms a non-redundant plan, which is also equivalent to $q(x)$.
\end{restatable}

\subsection{Result statements}

Our main theoretical contribution is the following theorem:

\begin{restatable}{Theorem}{thmalgorithm}\label{thm-algorithm}
There is an algorithm which, given an atomic query $q(x) \leftarrow r(a, x)$, a set $\mathcal{F}$ of path function definitions, and a set $\mathcal{UID}$ of UIDs, decides in polynomial time if there exists an equivalent rewriting of $q$. If so, the algorithm enumerates all the non-redundant plans that are equivalent rewritings of~$q$.
\end{restatable}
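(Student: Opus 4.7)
The plan is to exploit Property~\ref{property-call-sequence} to restrict attention to non-redundant plans that can be written as a single chain of calls $c_0, c_1, \ldots, c_k$: call $c_0$ takes $a$ as input, each $c_{i+1}$ takes an output of $c_i$ as input, and the plan's output variable $x$ appears in $c_k$. Since each function in $\mathcal{F}$ is itself a relation path, unfolding such a sequence yields a conjunctive query whose body is essentially a concatenation of these paths, augmented by equalities coming from filter constants whenever a call reuses an earlier constant or variable among its outputs. The task therefore reduces to deciding when such a chain is, under $\mathcal{UID}$, equivalent to the atomic query $q(x) \leftarrow r(a,x)$, and to enumerating those chains that are.

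Equivalence splits into two directions that can be analysed independently. \emph{Soundness}: the conjunctive query defined by the plan must, on every UID-satisfying instance, entail $r(a,x)$; this is a containment question resolvable by a chase with $\mathcal{UID}$ over the path produced by the chain. \emph{Completeness}: on every UID-satisfying instance containing $r(a,b)$, all calls in the chain must successfully bind $b$ to the output of $c_k$, which in turn requires $\mathcal{UID}$ to force the existence of every intermediate relational witness demanded by the chain. Both conditions are local: after a prefix of calls, what matters is only the relation currently linking $a$ to the frontier variable together with what $\mathcal{UID}$ propagates from it, so the relevant history collapses to a polynomially bounded piece of information.

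This motivates building a labeled transition system $\mathcal{A}$ whose states are essentially pairs of a relation witnessed at the current frontier and the current UID-consequences thereof; whose alphabet is the set of function definitions in $\mathcal{F}$ together with admissible filter patterns; whose transitions fire only when the UIDs guarantee the next call's first atom (preserving completeness) and the filters are consistent with the current context (preserving soundness); whose initial state records the constant $a$; and whose accepting states are those in which the accumulated relation from $a$ to the output of $c_k$ collapses under $\mathcal{UID}$ to $r(a,x)$. The state space has size polynomial in $|\mathcal{F}| + |\mathcal{UID}|$, so non-emptiness is decidable in PTIME by ordinary graph reachability, and the set of accepting runs is a regular language, giving a regular (hence context-free) grammar that enumerates exactly the non-redundant equivalent rewritings.

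The hard part will be establishing the two-way correspondence between accepting runs of $\mathcal{A}$ and non-redundant equivalent rewritings. The forward direction (every accepting run yields a rewriting) is more mechanical, since each transition has been designed to preserve the soundness and completeness invariants. The converse — that every non-redundant equivalent rewriting induces an accepting run — is the technical heart: one must argue that the limited state $\mathcal{A}$ keeps is enough to certify equivalence, i.e., that no richer bookkeeping on the history is needed either to rule out spurious rewritings or to certify completeness under $\mathcal{UID}$. I expect this to rest on a locality/normal-form lemma for the $\mathcal{UID}$-chase over path queries: because path functions only extend the frontier by a suffix of relations and unary inclusion dependencies propagate only unary existential constraints, the relevant consequences at each point are determined by the single relation incident to the current variable. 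Establishing this normal form is the main obstacle; with it in hand, the PTIME bound and the enumeration grammar follow from standard automata-theoretic arguments.
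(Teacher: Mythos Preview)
Your finite-state abstraction is the problem. You assert that ``after a prefix of calls, what matters is only the relation currently linking $a$ to the frontier variable together with what $\mathcal{UID}$ propagates from it,'' and conclude that a regular language captures the equivalent rewritings. This is false in general. Completeness requires that the plan's path query have a match in the chase of the single fact $r(a,b)$ by $\mathcal{UID}$; that chase is a tree, and a path that walks away from~$a$ along $r_1 r_2 \cdots r_k$ can only return by traversing $r_k^- r_{k-1}^- \cdots r_1^-$ in order. Remembering only the last relation (or any bounded summary) is not enough to know which inverse must come next: you need the whole stack. Concretely, with $r \leadsto s$ and $s^- \leadsto s$, the words $s^n (s^-)^n r$ are exactly the admissible loop-then-answer skeletons for each $n$, and $\{s^n (s^-)^n : n \ge 0\}$ is not regular. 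So no finite transition system $\mathcal{A}$ of the kind you describe can have its accepting runs coincide with the equivalent non-redundant plans, and your proposed ``locality/normal-form lemma for the $\mathcal{UID}$-chase over path queries'' cannot hold as stated.

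The paper's route is to accept this Dyck-like structure and encode it as a context-free grammar $\mathcal{G}_q$ (Definition~\ref{def-grammar}) over the alphabet~$\mathcal{R}$, whose productions $L_{r_i} \to r_i\, B_{r_i^-}\, r_i^-$ generate precisely the balanced forward--backward walks guaranteed by the chase. The function side \emph{is} regular: the regular expression $P_r$ (Definition~\ref{possible-plan}) describes all skeletons realisable by chaining path functions. The algorithm then intersects the CFG with $P_r$, and emptiness of the resulting context-free language is decidable in PTIME, which is what yields the polynomial decision procedure. Your automata-theoretic instinct is right, but the correct machine model for the ``$\mathcal{UID}$ side'' is a pushdown, not a DFA; the finite-state part of the construction lives on the plan side, not on the constraint side.
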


In other words, we can efficiently decide if equivalent rewritings exist, and when they do, the algorithm can compute them. Note that, in this case, the generation of an equivalent rewriting is \emph{not} guaranteed to be in polynomial time, as the equivalent plans are not guaranteed to be of polynomial size. Also, observe that this result gives a \emph{characterization} of the equivalent non-redundant plans, in the sense that \emph{all} such plans are of the form that our algorithm produces. Of course, as the set of equivalent non-redundant plans is generally infinite, our algorithm cannot actually write down all such plans, but it provides any such plan after a finite time. The underlying characterization of equivalent non-redundant plans is performed via a context-free grammar describing possible paths of a specific form, which we will introduce in the next section.

Our methods can also solve a different problem: given the query, path view definitions, unary inclusion dependencies, and given a candidate non-redundant plan, decide if the plan is correct, i.e., if it is an equivalent rewriting of the query. The previous result does not provide a solution as it produces all non-redundant equivalent plans in some arbitrary order. However, we can show using similar methods that this task can also be decided in polynomial time:

\begin{restatable}{Proposition}{propositionverification}\label{proposition-verification}
Given a set of unary inclusion dependencies, a set of path functions, an atomic query $q(x) \leftarrow r(a, x)$ and a non-redundant execution plan~$\pi_a$, one can determine in PTIME if $\pi_a$ is an equivalent rewriting of $q$.
\end{restatable}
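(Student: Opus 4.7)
The plan is to decompose the problem into two directions of query equivalence: \emph{soundness}, that $\pi_a(I) \subseteq q(I)$ for every instance $I \models \mathcal{UID}$; and \emph{completeness}, that $q(I) \subseteq \pi_a(I)$ for every such $I$. By Property~\ref{property-call-sequence}, a non-redundant plan unfolds into a chain of function calls augmented with at most a polynomial number of filter constants, so its semantics $q_\pi(x)$ is a connected conjunctive query of polynomial size. Thus the verification problem reduces to checking $q_\pi \equiv_{\mathcal{UID}} q$ in polynomial time.

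For the soundness direction, I would chase the frozen body of $q_\pi$ under $\mathcal{UID}$ and test whether the atom $r(a, x)$ appears, with $a$ mapped to the plan's input constant and $x$ mapped to the plan's output variable. Because the target query $q$ is atomic and $\mathcal{UID}$ consists only of unary inclusion dependencies on binary relations, the fragment of the chase needed to witness a single atom is polynomially bounded: it suffices to follow chains of UID-derived atoms from the existing body atoms toward $r(a, x)$. A simple reachability computation on the ``UID graph'' obtained from the closure of $\mathcal{UID}$ (already assumed in the paper) performs this check in PTIME.

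For the completeness direction, I would reuse the machinery underlying Theorem~\ref{thm-algorithm}. On any instance $I \models \mathcal{UID}$ containing the fact $r(a, b)$, the UIDs force the existence of all intermediate witnesses required by the successive calls of $\pi_a$; and for $\pi_a$ to be equivalent, the output of the final call must necessarily coincide with $b$. This can be verified by reading the sequence of relation names of $\pi_a$, together with the positions of its filter constants, against the polynomial-size object (the context-free grammar / automaton built from $\mathcal{UID}$ and $\mathcal{F}$) that the algorithm of Theorem~\ref{thm-algorithm} uses to characterize equivalent non-redundant plans. Since membership of a given plan in such a polynomial-size device is decidable in polynomial time, the completeness check also runs in PTIME.

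The hard part is the completeness check: the plan may chain several calls before the identity of the output is ``locked in'' with $b$, and one must propagate the identity of the input constant $a$ and the output variable $x$ along the entire chain while correctly accounting for filters. The key technical ingredient is therefore the characterization of equivalent non-redundant plans from Theorem~\ref{thm-algorithm}; once this characterization is in hand, verification reduces to a polynomial-time walk along the induced graph, checking whether the candidate plan's sequence of calls and filters fits an accepting derivation.
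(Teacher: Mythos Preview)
Your decomposition into soundness and completeness is reasonable, and you are right that the machinery behind Theorem~\ref{thm-algorithm} is the key ingredient. But the proposal, as written, has a real gap on the completeness side, and the soundness side is more convoluted than necessary.

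On soundness: chasing the frozen body of $q_\pi$ under $\mathcal{UID}$ adds only facts with one fresh null, so it can never produce the atom $r(a,x)$ with both $a$ and the distinguished output variable~$x$ as pre-existing elements. Your ``reachability computation on the UID graph'' therefore collapses to the syntactic check ``does the body already contain $r(a,x)$ or $r^-(x,a)$?''. That is exactly the paper's notion of a \emph{well-filtering} plan (Lemma in the appendix), and the paper simply performs this syntactic test. Your description is not wrong, but it obscures that nothing nontrivial happens.

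On completeness: here you appeal to ``reading the sequence of relation names of $\pi_a$, together with the positions of its filter constants, against the polynomial-size object \ldots\ from Theorem~\ref{thm-algorithm}''. The problem is that this object (the grammar $\mathcal{G}_q$ intersected with the regular expression $P_r$) is a language over~$\mathcal{R}$; it does not encode filter positions at all. The connection to plans goes through the path transformation $\mathcal{P}$, and $\mathcal{P}$ is only defined for \emph{minimal filtering} plans, i.e., plans carrying at most one filter placed at a canonical position. An arbitrary non-redundant plan can have many filters, and you give no mechanism for reducing it to something the grammar can test. Saying ``correctly accounting for filters'' and ``fits an accepting derivation'' is precisely the step that needs to be supplied, not assumed.

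The paper fills this gap with an explicit intermediate reduction. After the well-filtering check, it forms the associated \emph{minimal filtering plan} $\pi_a^{\min}$ and invokes a separate result (Theorem~\ref{thm-well-filtering-minimal-construction}) showing that one can decide in PTIME whether $\pi_a$ is equivalent to $\pi_a^{\min}$; this is done by checking, for each filter of $\pi_a$, whether the loop query induced by its root path is derivable from $B_r$ in the grammar. Only then does one compute the word $w = \mathcal{P}(\pi_a^{\min})$ and test $w \in \mathcal{L}_q$, which is a standard PTIME CFG membership test. Your proposal jumps straight to the last step without the reduction, and without that reduction the argument does not go through.
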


That proposition concludes the statement of our main theoretical contributions. We describe in the next section the algorithm used to show our main theorem (Theorem~\ref{thm-algorithm}) and used for our experiments in Section~\ref{sec-experiments}.
The \ifthenelse{\boolean{final}}{extended version of this paper~\cite{romero2020equivalent}}{appendix} contains the proofs for our theorems.%

\section{Algorithm}
\label{sec:thealgorithm}

We now present the algorithm used to show Theorem~\ref{thm-algorithm}. The presentation explains at a high level how the algorithm can be implemented, as we did for the experiments in Section~\ref{sec-experiments}. However, some formal details of the algorithm are deferred to the \ifthenelse{\boolean{final}}{extended version of this paper~\cite{romero2020equivalent}}{appendix}, as well as the formal proof.

Our algorithm is based on a characterization of the non-redundant equivalent rewritings as the intersection between a context-free grammar and a regular expression (the result of which is itself a context-free language). The context-free grammar encodes the UID constraints and generates a language of words that intuitively describe forward-backward paths that are guaranteed to exist under the UIDs. As for the regular expression, it encodes the path functions and expresses the legal execution plans. Then, the intersection gets all non-redundant execution plans that satisfy the UIDs. We first detail the construction of the grammar, and then of the regular expression.

\subsection{Defining the context-free grammar of forward-backward paths}\label{sec:CFG}
Our context-free grammar intuitively describes a language of forward-backward paths, which intuitively describe the sequences of relations that an equivalent plan can take to walk away from the input value on an instance, and then walk back to that value, as in our example on Figure~\ref{fig:ex1}, to finally use the relation that consists of the query answer: in our example, the plan is  \emph{get\-Album(Jailhouse,$a$), get\-Album\-Details($a$,Jailhouse,$m$)}. The grammar then describes all such back-and-forth paths from the input value that are guaranteed to exist thanks to the unary inclusion dependencies that we assumed in $\mathcal{UID}$. Intuitively, it describes such paths in the \emph{chase} by $\mathcal{UID}$ of an answer fact. We now define this grammar, noting that the definition is independent of the functions in $\mathcal{F}$:

\begin{Definition}[Grammar of forward-backward paths]\label{def-grammar} Given a set of relations $\mathcal{R}$, given an atomic query $q(a, x) \leftarrow r(a,x)$ with $r \in \mathcal{R}$, and given a set of unary inclusion dependencies $\mathcal{UID}$,
the \emph{grammar of forward-backward paths} is a context-free grammar $\mathcal{G}_q$, whose language is written $\mathcal{L}_q$,
with the non-terminal symbols $S \cup \{L_{r_i},  B_{r_i}  \mid r_i \in \mathcal{R} \}$,  the terminals $\{r_i \mid r_i \in \mathcal{R}\}$, the start symbol $S$, and the following productions:

	\begin{align}
		& S\rightarrow B_r r \label{ag2} \\
		& S\rightarrow B_r r  B_{r^-} r^- \label{ag4}\\
		& \forall r_i, r_j\in \mathcal{R} \text{ s.t. } r_i \leadsto r_j \text{~in~} \mathcal{UID}: B_{r_i} 
		\rightarrow B_{r_i} L_{r_j} \label{ag5}\\
		& \forall r_i \in \mathcal{R} : B_{r_i} \rightarrow \epsilon \label{ag8}\\
		& \forall r_i\in \mathcal{R}: L_{r_i} \rightarrow r_i B_{r^-_i} r^-_i \label{ag7}
	\end{align} 
\end{Definition}

The words of this grammar describe the sequence of relations of paths starting at the input value and ending by the query relation $r$, which are guaranteed to exist thanks to the unary inclusion dependencies $\mathcal{UID}$. In this grammar, the $B_{r_i}$s represent the paths that ``loop'' to the position where they started, at which we have an outgoing $r_i$-fact. These loops are either empty (Rule~\ref{ag8}), are concatenations of loops which may involve facts implied by $\mathcal{UID}$ (Rule~\ref{ag5}), or may involve the outgoing $r_i$ fact and come back in the reverse direction using $r_i^-$ after a loop at a position with an outgoing $r_i^-$-fact (Rule~\ref{ag7}).

\subsection{Defining the regular expression of possible plans}\label{sec:RL}
While the grammar of forward-backward paths describes possible paths that are guaranteed to exist thanks to $\mathcal{UID}$, it does not reflect the set $\mathcal{F}$ of available functions. This is why we intersect it with a regular expression that we will construct from $\mathcal{F}$, to describe the possible sequences of calls that we can perform following the description of non-redundant plans given in Property~\ref{property-call-sequence}.

The intuitive definition of the regular expression is simple: we can take any sequence of relations, which is the semantics of a function in $\mathcal{F}$, and concatenate such sequences to form the sequence of relations corresponding to what the plan retrieves. However, there are several complications. First, for every call, the output variable that we use may not be the last one in the path, so performing the call intuitively corresponds to a prefix of its semantics: we work around this by adding some backward relations to focus on the right prefix when the output variable is not the last one. Second, the last call must end with the relation $r$ used in the query, and the variable that precedes the output variable of the whole plan must not be existential (otherwise, we will not be able to filter on the correct results). Third, some plans consisting of one single call must be handled separately. Last, the definition includes other technicalities that relate to our choice of so-called \emph{minimal filtering plans} in the correctness proofs that we give in the \ifthenelse{\boolean{final}}{extended version~\cite{romero2020equivalent}}{appendix}.
Here is the formal definition:

\begin{Definition}[Regular expression of possible plans]\label{possible-plan} Given a set of functions $\mathcal{F}$   and an atomic query  $q(x) \leftarrow r(a,x)$, 
for each function $f: r_1(x_0, x_1), ... r_n(x_{n-1}, x_n)$ of $\mathcal{F}$ and input or output variable $x_i$, define:
 \[
    w_{f,i}= \left\{\begin{array}{ll}
        r_1 \ldots r_i & \text{ if } i=n \\
        r_1 \ldots r_n r_n^{-} ... r^{-}_{i+1} & \text{ if }  0 \leq i < n  \\
        \end{array}\right.\\
  \]
For $f \in \mathcal{F}$ and $0 \leq i < n$, we say that a $w_{f,i}$ is \emph{final} when:
\begin{itemize}
    \item the last letter of $w_{f,i}$ is $r^-$, or it is $r$ and we have $i > 0$;
    \item writing the body of $f$ as above, the variable $x_{i+1}$ is an output variable;
    \item for $i < n-1$, if $x_{i+2}$ is an output variable, we require that $f$ does not contain the atoms: $r(x_{i}, x_{i+1}).r^-(x_{i+1}, x_{i+2})$.
\end{itemize}

\noindent The regular expression of possible plans is then $P_r=W_0 | (W^* W')$, where:
\begin{itemize}
    \item $W$ is the disjunction over all the $w_{f,i}$ above with $0 < i \leq n$.
    \item $W'$ is the disjunction over the final $w_{f,i}$ above with $0 < i < n$.
    \item $W_0$ is the disjunction over the final $w_{f,i}$ above with $i= 0$.
\end{itemize}
\end{Definition}

\subsection{Defining the algorithm}

We can now present our algorithm to decide the existence of equivalent rewritings and enumerate all non-redundant equivalent execution plans when they exist, which is what we use to show Theorem~\ref{thm-algorithm}:

\noindent \textbf{Input}: a set of path functions $\mathcal{F}$, a set of relations $\mathcal{R}$, a set of $\mathcal{UID}$ of UIDs, and an atomic query $q(x) \leftarrow r(a, x)$.\\
\noindent \textbf{Output}: a (possibly infinite) list of rewritings.
\begin{enumerate}
    \item Construct the grammar $\mathcal{G}_q$ of forward-backward paths (Definition~\ref{def-grammar}).
    \item Construct the regular expression $P_r$  of possible plans (Definition~\ref{possible-plan}).
    \item Intersect $P_r$ and $\mathcal{G}_q$ to create a grammar $\mathcal{G}$
    \item Determine if the language of $\mathcal{G}$ is empty:
        \begin{enumerate}
            \item[] If \underline{no}, then no equivalent rewritings exist and stop;
            \item[] If \underline{yes}, then continue   
        \end{enumerate}
    \item For each word $w$ in the language of $\mathcal{G}$:
    \begin{itemize}
    \item For each  
    execution plan $\pi_a(x)$ that can be built from $w$ (intuitively decomposing $w$ using $P_r$, see \ifthenelse{\boolean{final}}{extended version~\cite{romero2020equivalent}}{appendix} for details):
    \begin{itemize}
        \item For each subset $S$ of output variables of~$\pi_a(x)$:
        \begin{itemize}
    \item If adding a filter to~$a$ on the outputs in~$S$ gives an equivalent plan, then output the plan (see \ifthenelse{\boolean{final}}{extended version~\cite{romero2020equivalent}}{appendix} for how to decide this)
    \end{itemize}
     \end{itemize}\end{itemize}
\end{enumerate}

\noindent Our algorithm thus decides the existence of an equivalent rewriting by computing the intersection of a context-free language and a regular language and checking if its language is empty.
As this problem can be solved in PTIME, 
the complexity of our entire algorithm is polynomial in the size of its input.
The correctness proof of our algorithm (which establishes Theorem~\ref{thm-algorithm}), and the variant required to show Proposition~\ref{proposition-verification}, are given in the \ifthenelse{\boolean{final}}{extended version of this paper~\cite{romero2020equivalent}}{appendix}.

\section{Experiments}\label{sec-experiments}

We have given an algorithm that, given an atomic query and a set of path functions, generates all equivalent plans for the query (Section~\ref{sec:thealgorithm}). We now compare our approach experimentally to two other methods, Susie~\cite{susie}, and PDQ~\cite{benedikt2014pdq}, on both synthetic datasets and real functions from Web services.
\\[-2em]

\subsection{Setup}

We found only two systems that can be used to rewrite a query into an equivalent execution plan: Susie~\cite{susie} and PDQ (Proof-Driven Querying)~\cite{benedikt2014pdq}. We benchmark them against our implementation. All algorithms must answer the same task: given an atomic query and a set of path functions, produce an equivalent rewriting, or claim that there is no such rewriting.

We first describe the Susie approach. Susie takes as input a query and a set of Web service functions and extracts the answers to the query both from the functions and from Web documents. Its rewriting approach is rather simple, and we have reimplemented it in Python. However, the Susie approach is not complete for our task: she may fail to return an equivalent rewriting even when one exists. What is more, as Susie is not looking for equivalent plans and makes different assumptions from ours, the plan that she returns may not be equivalent rewritings (in which case there may be a different plan which is an equivalent rewriting, or no equivalent rewriting at all).

Second, we describe PDQ. The PDQ system is an approach to generating query plans over semantically interconnected data sources with diverse access interfaces. We use the official Java release of the system. PDQ runs the chase algorithm~\cite{ALICE} to create a canonical database, and, at the same time, tries to find a plan in that canonical database. If a plan exists, PDQ will eventually find it; and whenever PDQ claims that there is no equivalent plan, then indeed no equivalent plan exists. However, in some cases, the chase algorithm used by PDQ may not terminate. In this case, it is impossible to know whether the query has a rewriting or not. We use PDQ by first running the chase with a timeout, and re-running the chase multiple times in case of timeouts while increasing the search depth in the chase, up to a maximal depth. The exponential nature of PDQ's algorithm means that already very small depths (around 20) can make the method run for hours on a single query.

Our method is implemented in Python and follows the algorithm presented in the previous section. For the manipulation of formal languages, we used pyformlang\footnote{\url{https://pyformlang.readthedocs.io}}. Our implementation is available online\footnote{\url{https://github.com/Aunsiels/query_rewriting}}. All experiments were run on a laptop with Linux, 1 CPU with 4 cores at  2.5GHz, and 16 GB RAM.

\subsection{Synthetic Functions}\label{sec-synthetic}

In our first experiments, we consider a set of artificial relations $\mathcal{R}=\{r_1,...,r_n\}$, and randomly generate path functions up to length 4. Then we tried to find a equivalent plan for each query of the form $r(c,x)$ for $r\in\mathcal{R}$. The set $\mathcal{UID}$ consists of all pairs of relations $r \leadsto s$ for which there is a function in whose body $r^-$ and $s$ appear in two successive atoms. We made this choice because functions without these UIDs are useless in most cases.

For each experiment that we perform, we generate 200 random instances of the problem, run each system on these instances, and average the results of each method.
Because of the large number of runs, we had to put a time limit of 2 minutes per chase for PDQ and a maximum depth of 16 (so the maximum total time with PDQ for each query is 32 minutes). In practice, PDQ does not strictly abide by the time limit, and its running time can be twice longer.
We report, for each experiment, the following numbers:
\begin{itemize}
    \item Ours: The proportion of instances for which our approach found an equivalent plan. As our approach is proved to be correct, this is the true proportion of instances for which an equivalent plan exists.
    \item Susie: The proportion of instances for which Susie returned a plan which is actually an equivalent rewriting (we check this with our approach).
    \item PDQ: The proportion of instances for which PDQ returned an equivalent plan (without timing out): these plans are always equivalent rewritings.
    \item Susie Requires Assumption: The proportion of instances for which Susie returned a plan, but the returned plan is not an equivalent rewriting (i.e., it is only correct under the additional assumptions made by Susie).
    \item PDQ Timeout: The proportion of instances for which PDQ timed out (so we cannot conclude whether a plan exists or not).
\end{itemize}
In all cases, the two competing approaches (Susie and PDQ) cannot be better than our approach, as we always find an equivalent rewriting when one exists, whereas Susie may fail to find one (or return a non-equivalent one), and PDQ may timeout. The two other statistics (Susie Requires Assumption, and PDQ Timeout) denote cases where our competitors fail, which cannot be compared to the performance of our method.

\begin{figure}[t]
\captionsetup[subfigure]{oneside,margin={1cm,0cm}}
\subfloat[]{
    \includegraphics[width=0.5\linewidth]{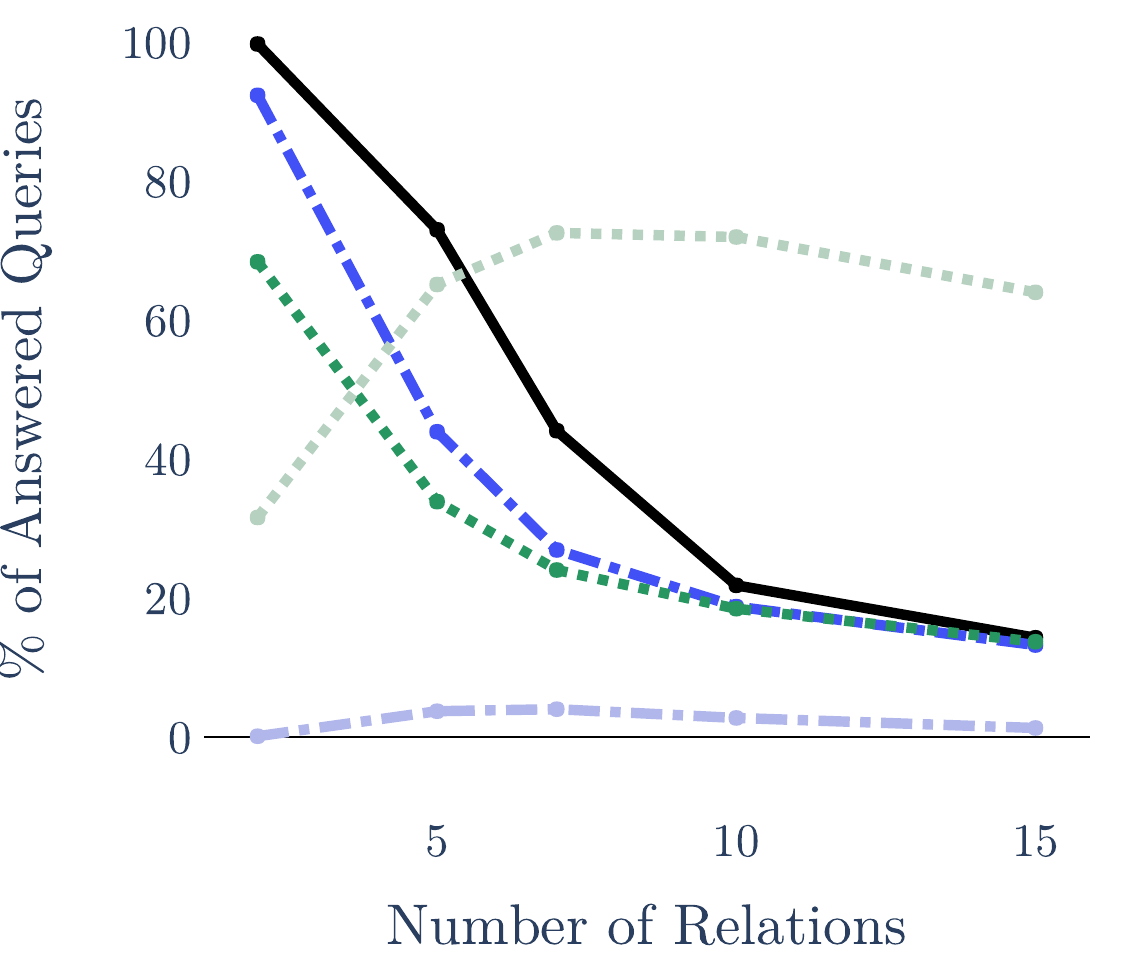}
    \label{answerednrelations}
} 
\captionsetup[subfigure]{oneside,margin={1cm,0cm}}
\hspace{-.5cm}
\raisebox{-.07cm}{
\subfloat[]{
    \includegraphics[width=0.5\linewidth]{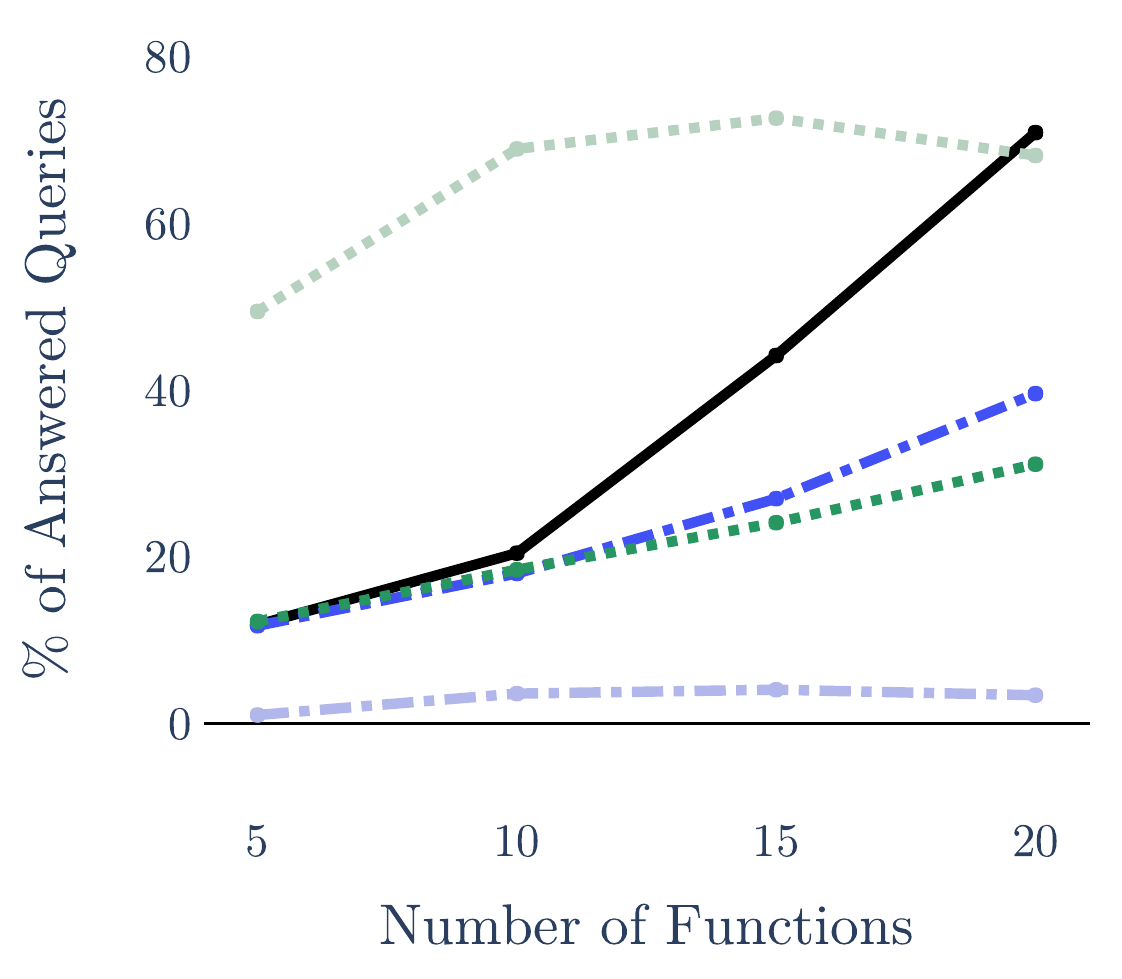}
    \label{answerednfunctions}
}
}
\\[-0.4cm]
\subfloat[]{
    \includegraphics[width=0.5\linewidth]{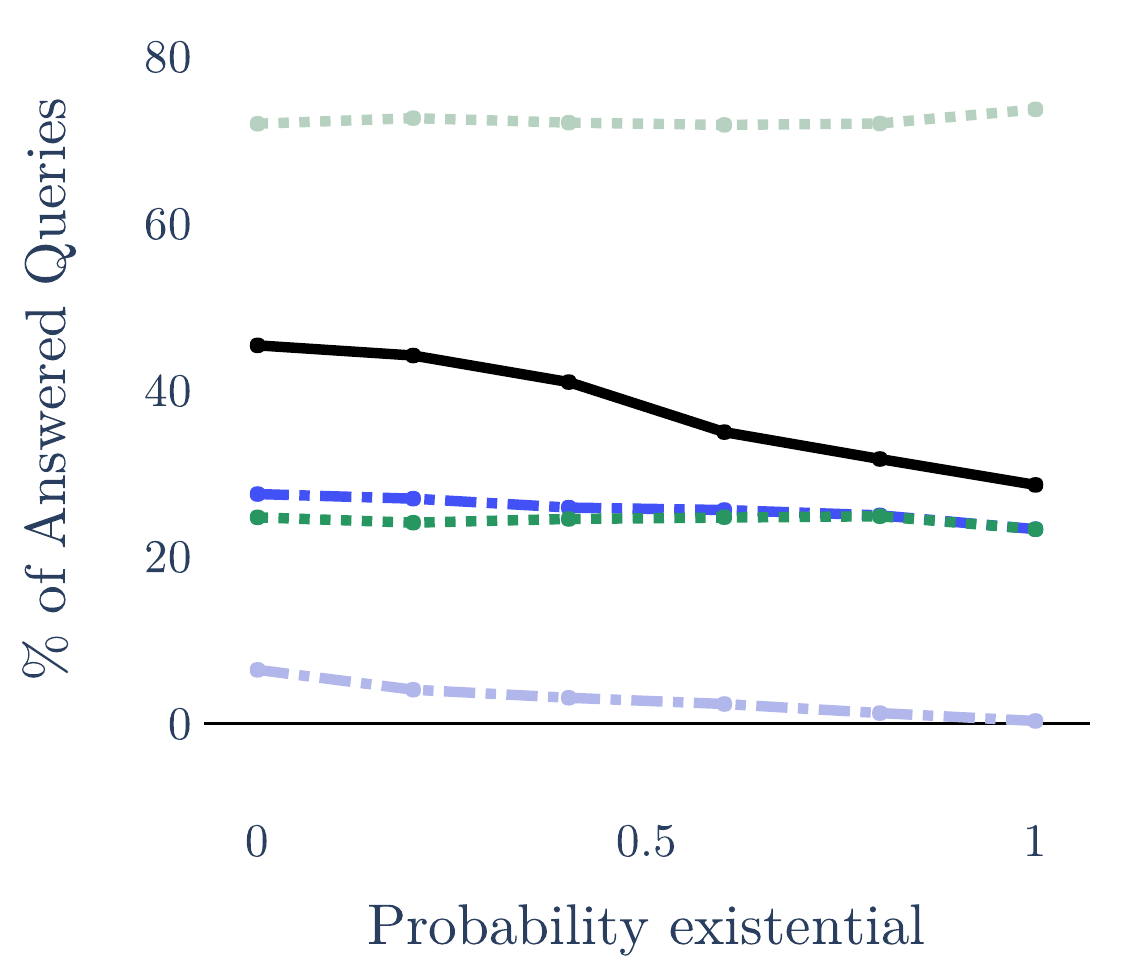}
    \label{answerednproba}
}\hfill
\subfloat[]{
    \includegraphics[width=0.37\linewidth]{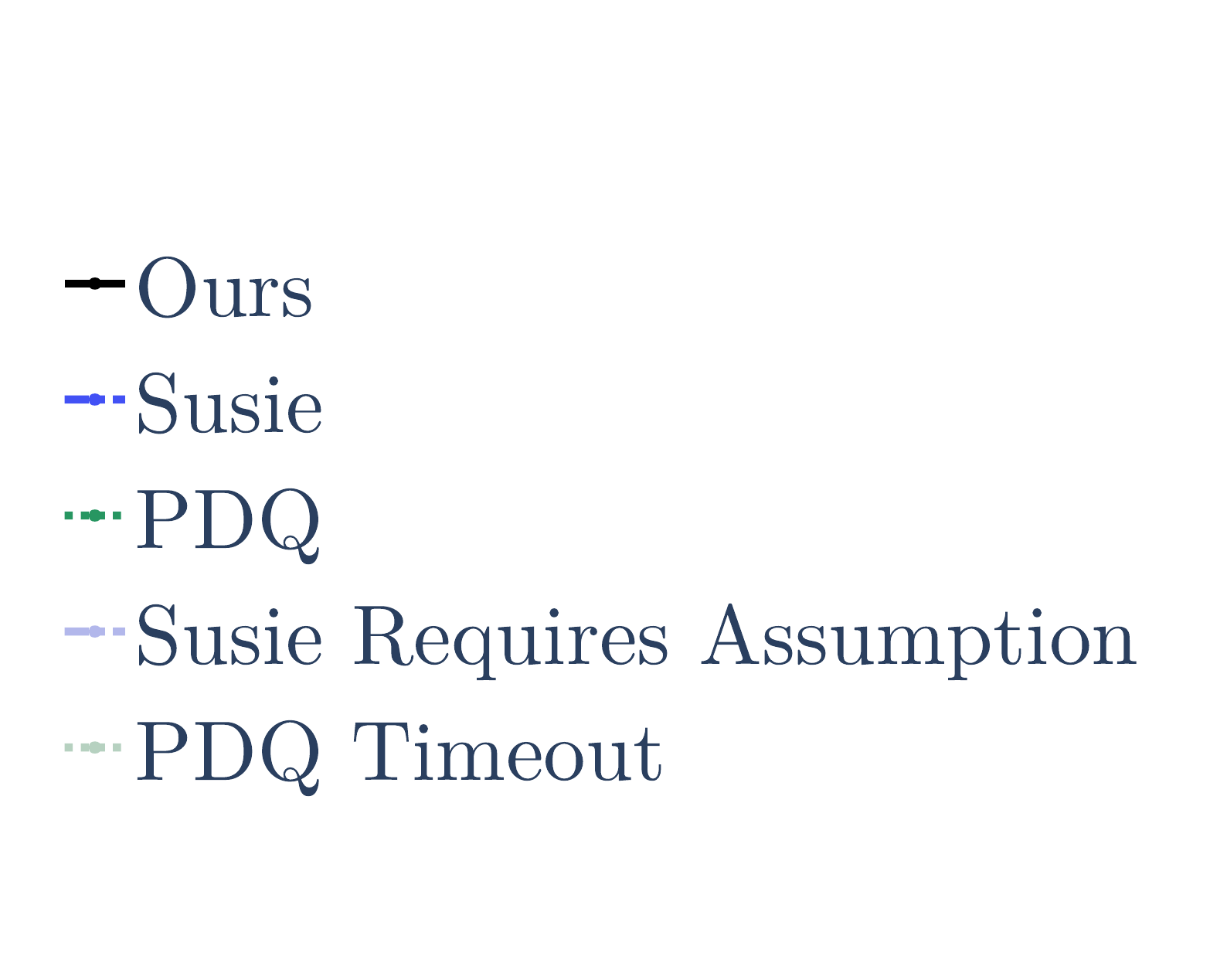}
}
\hfill\null
\caption{Percentage of answered queries  with varying number of (a) relations, (b) functions, and (c) existential variables; (d) key to the plots. %
}%
\end{figure}

In our first experiment, we limited the number of functions to 15, with 20\% of existential variables, and varied the number $n$ of relations. Both Susie and our algorithm run in less than 1 minute in each setting for each query, whereas PDQ may timeout.
Figure~\ref{answerednrelations} shows which percentage of the queries can be answered. As expected, when the number of relations increases, the rate of answered queries decreases as it becomes harder to combine functions. 
Our approach can always answer strictly more queries than Susie and PDQ.

In our next experiment, we fixed the number of relations to 7, the probability of existential variables to 20\%, and varied the number of functions. Figure~\ref{answerednfunctions} shows the results. As we increase the number of functions, we increase the number of possible function combinations. Therefore, the percentage of answered queries increases both for our approach and for our competitors. However, our approach answers about twice as many queries as Susie and PDQ.

In our last experiment, we fixed the number of relations to 7, the number of functions to 15, and we varied the probability of having an existential variable. Figure \ref{answerednproba} shows the results. As we increase the probability of existential variables, the number of possible plans decreases because fewer outputs are available to call other functions. However, the impact is not as marked as before, because we have to impose at least one output variable per function, which, for small functions, results in few existential variables. As Susie and PDQ use these short functions in general, changing the probability did not impact them too much. Still, our approach can answer about twice as many queries as Susie and PDQ.

\subsection{Real-World Web Services}

\begin{table}[t]
\centering
  \caption{Web services and results}
  {
  \renewcommand{\tabcolsep}{9pt}
  \label{real-results}%
       	\begin{tabularx}{\linewidth}{@{~~}Xrrrrrr@{~~}}
			\toprule
			Web Service & Functions & \ignore{Extracted Path Functions &} Relations & Susie & PDQ (timeout) & Ours\\
			\midrule
			Movies & 2 & \ignore{20 &} 8 & 13\% & \textbf{25}\% (0\%) & \textbf{25}\% \\
			Books & 13 & \ignore{136 &} 28 & 57\% & 64\% (7\%) & \textbf{68}\% \\
			Music & 24 & \ignore{162 &} 64 & 22\% & 22\% (25\%) & \textbf{33}\% \\
  \bottomrule 
  \end{tabularx}
  }
\end{table}

We consider the functions of Abe Books (\url{http://search2.abebooks.com}), ISBNDB (\url{http://isbndb.com/}), LibraryThing (\url{http://www.librarything.com/}), and MusicBrainz (\url{http://musicbrainz.org/}), all used in~\cite{susie}, and Movie DB (\url{https://www.themoviedb.org}) to replace the (now defunct) Internet Video Archive used in~\cite{susie}.
We add to these functions some other functions built by the Susie approach.
We group these Web services into three categories: Books, Movies, and Music, on which we run experiments separately. For each category, we manually map all services into the same schema and generate the UIDs as in Section~\ref{sec-synthetic}.
Our dataset is available online (see URL above).

\begin{table*}[t]
\centering
\caption{Examples of real functions}\label{examplesFunctions}
\begin{tabularx}{\linewidth}{X}
  \toprule
\textbf{GetCollaboratorsByID}(\underline{artistId}, collab, collabId) $\leftarrow$ \\
$~~~$hasId$^-$(artistId,artist),
isMemberOf(artist,collab), hasId(collab,collabId)
\ \\
\textbf{GetBookAuthorAndPrizeByTitle}(\underline{title}, author, prize) $\leftarrow$ \\
 $~~~$isTitled$^-$(title, book), wrote$^-$(book,author), hasWonPrize(author,prize)
\ \\
\textbf{GetMovieDirectorByTitle}(\underline{title}, director) $\leftarrow$ \\
$~~~$ isTitled$^-$(title,movie), directed$^-$(movie,director) \\
  \bottomrule
\end{tabularx}
\end{table*}

The left part of Table~\ref{real-results} shows the number of functions and the number of relations for each Web service.
Table~\ref{examplesFunctions} gives examples of functions. Some of them are recursive. For example, the first function in the table allows querying for the collaborators of an artist, which are again artists. This allows for the type of infinite plans that we discussed in the introduction, and that makes query rewriting difficult.

\begin{table*}[t]
	\caption{Example plans}\label{examples}%
	\begin{tabularx}{\linewidth}{Xl}
		\toprule
		Query \ignore{& Skeleton word} & Execution Plan\\
		\midrule
  released & \ignore{& hasArtistId, hasArtistId$^-$, $\star$, released, $\star$, hasArtistId, hasArtistId$^-$, $\star$, sang, sang$^-$, released$^-$ } GetArtistInfoByName, GetReleasesByArtistID, GetArtistInfoByName,\\
  &GetTracksByArtistID, GetTrackInfoByName, GetReleaseInfoByName \\
  published & \ignore{published, $\star$, wrote$^-$, wrote, published$^-$&} GetPublisherAuthors, GetBooksByAuthorName \\
  actedIn & \ignore{yactedIn, $\star$, isTitled, isTitled$^-$, actedIn$^-$&} GetMoviesByActorName, GetMovieInfoByName \\
		\bottomrule
	\end{tabularx}
\end{table*}

For each Web service, we considered all queries of the form $r(c,x)$ and $r^-(c,x)$, where $r$ is a relation used in a function definition. We ran the Susie algorithm, PDQ, and our algorithm for each of these queries. The runtime is always less than 1 minute for each query for our approach and Susie but can timeout for PDQ. The time limit is set to 30 minutes for each chase, and the maximum depth is set to 16. Table~\ref{real-results} shows the results, similarly to Section~\ref{sec-synthetic}. As in this case, all plans returned by Susie happened to be equivalent plans, we do not include the ``Susie Requires Assumption'' statistic (it is $0\%$).
Our approach can always answer more queries than Susie and PDQ,
and we see that with more complicated problems (like Music), PDQ tends to timeout more often.

In terms of the results that we obtain, some queries can be answered by rather short execution plans. Table~\ref{examples} shows a few examples. However, 
our results show that many queries do not have an equivalent plan. In the Music domain, for example, it is not possible to answer \emph{produced}$(c,x)$ (i.e., to know which albums a producer produced), \emph{hasChild$^-$(c,x)} (to know the parents of a person), and \emph{rated$^-$}$(c,x)$ (i.e., to know which tracks have a given rating). 
This illustrates that the services maintain control over the data, and do not allow arbitrary requests.

\section{Conclusion}

In this paper, we have addressed the problem of finding equivalent execution plans for Web service functions. We have characterized these plans for atomic queries and path functions, and we have given a correct and complete method to find them. Our experiments have demonstrated that our approach can be applied to real-world Web services and that its completeness entails that we always find plans for more queries than our competitors. All experimental data, as well as all code, is available at the URL given in Section~\ref{sec-experiments}.
We hope that our work can help Web service providers to design their functions, and users to query the services more efficiently. For future work, we aim to broaden our results to non-path functions.
We also intend to investigate connections between our theoretical results and the methods by Benedikt et al. \cite{benedikt2016generating}, in particular possible links between our techniques and those used to answer regular path queries under logical constraints \cite{bienvenu2015regular}.

\paragraph{Acknowledgements}
Partially supported by the grants ANR-16-CE23-0007-01 (``DICOS'') and ANR-18-CE23-0003-02 (``CQFD'').
\bibliographystyle{plain}
\bibliography{bib}

\newpage
\ifthenelse{\boolean{final}}{}{\appendix

\section{Proofs on Non-Redundant Plans (Section~\ref{sec:nonredundant})}

In this first section of the appendix, we give proofs pertaining to non-redundant plans (Section~\ref{sec:nonredundant}). We introduce in particular the notion of \emph{well-filtering plans} (Section~\ref{sec:wellfiltering}), which will be useful later.

The next section of the appendix (Appendix~\ref{sec-approach}) gives a high-level presentation of key technical results about \emph{minimal filtering plans} and \emph{capturing languages}. These claims are then proved in Appendix~\ref{ap:proofs}. Last, we give in Appendix~\ref{apx:main-text-proofs} the proofs of the missing details of our main claims (Section~\ref{sec:problem}) and of our algorithm (Section~\ref{sec:thealgorithm}).

\subsection{Proof of the Structure of Non-Redundant Plans (Property~\ref{property-call-sequence})}

\corollarysequence*

By definition of a non-redundant plan, there is an atom using the constant $a$ as input. Let us call this atom $c_0$. Let us then define the sequence $c_0, c_1, \ldots, c_i$, and let us assume that at some stage we are stuck, i.e., we have chosen a call $c_i$ such that none of the output variables of $c_i$ are used as input to another call. If the output of the plan is not in $c_i$, then $c_i$ witnesses that the plan is redundant. Otherwise, the output of the plan is in $c_i$. If we did not have $i=k$, then any of the calls not in $c_0, c_1, \ldots, c_i$
witness that the plan is redundant. So we have $i=k$, and we have defined the sequence $c_0, c_1, \ldots, c_k$ as required.

\subsection{Well-Filtering Plans}
\label{sec:wellfiltering}

In this subsection, we introduce \emph{well-filtering plans}, which are used both to show that we can always restrict to non-redundant plans (Property~\ref{prop-redundant}, showed in the next appendix section) and for the correctness proof of our algorithm. We then show a result (Lemma~\ref{equivalent_r_atom}) showing that we can always restrict our study to well-filtering plans.

Let us first recall the notion of the \emph{chase}~\cite{ALICE}. The chase of an instance $I$ by a set $\mathcal{UID}$ of unary inclusion dependencies (UIDs) is a (generally infinite) instance obtained by iteratively solving the violations of $\mathcal{UID}$ on~$I$ by adding new facts. In particular, if $I$ already satisfies $\mathcal{UID}$, then the chase of $I$ by $\mathcal{UID}$ is equal to $I$ itself. The chase is known to be a \emph{canonical database} in the sense that it satisfies precisely the queries that are true on all completions of $I$ to make it satisfy $\mathcal{UID}$. We omit the formal definition of the chase and refer the reader to~\cite{ALICE} for details about this construction. We note the following property, which can be achieved whenever $\mathcal{UID}$ is closed under UID implication, and when we do the so-called \emph{restricted} chase which only solves the UID violations that are not already solved:

\begin{property}
\label{ppt:uidchase}
  Let $f$ be a single fact, and let $I$ be the instance obtained by applying the chase on~$f$. Then for each element $c$ of~$I_0$, for each relation $r \in \mathcal{R}$, there is at most one fact of~$I_0$ where $c$ appears in the first position of a fact for relation~$r$.
\end{property}

Remember now that that plans can use \emph{filters}, which allow us to only consider the results of a function call where some variable is assigned to a specific constant. In this section, we show that, for any plan $\pi_a$, the only filters required are on the constant $a$. Further, we show that they can be applied to a single well-chosen atom.

\begin{Definition}[Well-Filtering Plan]
\label{def:wellfiltering}
Let $q(x) \leftarrow r(a, x)$ be an atomic query. An execution plan $\pi_a(x)$ is said to be \emph{well-filtering for $q(x)$} if all filters of the plan are on the constant $a$ used as input to the first call and the semantics of $\pi_a$ contains at least an atom $r(a, x)$ or $r^-(x, a)$, where $x$ is the output variable.
\end{Definition}

We can then show :

\begin{Lemma}\label{equivalent_r_atom}
   Given an atomic query $q(a,x) \leftarrow r(a,x)$ and a set of inclusion dependencies $\mathcal{UID}$, any equivalent rewriting of $q$ must 
   be well-filtering.
\end{Lemma}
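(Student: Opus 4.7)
The proof is by contradiction, establishing the two parts of Definition~\ref{def:wellfiltering} separately. The common tool is the chase $I_0$ of $\{r(a, b)\}$ by $\mathcal{UID}$, where $b$ is chosen fresh (distinct from $a$ and from every constant appearing in the plan). By Property~\ref{ppt:uidchase}, $I_0$ is highly structured: each element has at most one outgoing fact per relation, and the only named constants in $I_0$ are $a$ and $b$. Since $q(I_0) = \{b\}$, equivalence of $\pi_a$ and $q$ forces $\pi_a(I_0) = \{b\}$, so the semantics $\phi$ of $\pi_a$ has an embedding $\sigma$ into $I_0$ with $\sigma(x) = b$.

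For the claim that all filters are on $a$: any filter constant appearing in $\phi$ is a fixed constant that must appear in $I_0$ for $\sigma$ to exist. Since the only named constants in $I_0$ are $a$ and $b$, and $b$ was chosen distinct from every constant in the plan, every filter constant of $\pi_a$ must equal $a$.

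For the claim that $\phi$ contains $r(a, x)$ or $r^-(x, a)$, the first step is to exhibit some atom $B \in \phi$ with $\sigma(B) = r(a, b)$. Consider the sub-instance $I_\sigma = \sigma(\phi) \subseteq I_0$ and its chase $I_\sigma^*$ by $\mathcal{UID}$: the embedding $\sigma$ remains valid in $I_\sigma^*$, so $b \in \pi_a(I_\sigma^*) = q(I_\sigma^*)$, forcing $r(a, b) \in I_\sigma^*$. Since the chase only introduces facts involving fresh nulls and $b$ is a named constant, $r(a, b)$ must already lie in $I_\sigma = \sigma(\phi)$, so some atom $B \in \phi$ satisfies $\sigma(B) = r(a, b)$.

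The main obstacle is to show that this atom $B$ is \emph{syntactically} $r(a, x)$ or $r^-(x, a)$, rather than an atom $r(\alpha, \beta)$ (or $r^-(\beta, \alpha)$) in which $\alpha$ is a variable mapped by $\sigma$ to the constant $a$, or $\beta$ is a variable distinct from $x$ mapped to $b$. For each such bad form, I would construct a UID-satisfying instance breaking equivalence. Consider first the case where $\alpha$ is a variable: the path from $a$ to $\alpha$ in $\phi$'s variable tree has length at least one, and $\sigma$ realizes its atoms as a non-trivial closed walk at $a$ in $I_0$. I would then attach to $I_0$ a fresh sequence of facts realizing the same relation-pattern from $a$ to a new element $\tilde a$, together with a fresh edge $r(\tilde a, b')$, and chase by $\mathcal{UID}$. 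By closure of $\mathcal{UID}$ under implication and the uniformity of the chase, the chase-generated neighborhoods at the intermediate nodes of the new path mirror those along the walk in $I_0$, so the branches of $\phi$'s variable tree rooted on this segment can be consistently embedded via a modified $\tilde\sigma$ with $\tilde\sigma(x) = b'$, yielding $b' \in \pi_a$ but $b' \notin q$, a contradiction. The symmetric case where $\beta$ is a variable distinct from $x$ is handled by attaching a fresh path on the $b$-side instead.
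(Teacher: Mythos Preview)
Your argument for the first requirement (all filters are on~$a$) matches the paper's. For the second requirement the paper proceeds differently and more globally: writing the semantics as $A(a, x_1,\ldots,x_n) \wedge B(x_1,\ldots,x_n)$ where $A$ collects exactly the atoms touching~$a$, it assumes no relation in $A$ equals~$r$ and builds a single witness instance $I_1^*$ by adjoining to the chase $I_0^*$ a ``twin'' element $a_1$ that shares with $a$ every outgoing fact except the $r$-fact (and symmetrically a twin $b_1$ of~$b$). Then $A(a,\ldots)$ and $A(a_1,\ldots)$ have identical matches, $B$ mentions neither constant, so the whole plan returns the same answers from $a$ and from $a_1$ --- contradicting equivalence since $q(a,x)$ and $q(a_1,x)$ differ on~$I_1^*$.

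Your route --- isolate one atom $B$ with $\sigma(B) = r(a,b)$ and argue it must literally be $r(a,x)$ by attaching a fresh path --- has a real gap at the step where you claim the side branches of $\phi$ re-embed into the chase of the modified instance. The closed walk $a = c_0 \to \cdots \to c_k = a$ in $I_0$ may revisit $a$ at an interior position $c_i$; there $c_i = a$ carries an outgoing $t$-fact for every $t$ with $r \leadsto t$, but the fresh node $d_i$ on your attached path only acquires the relations implied by its own incoming label $s_i^-$, and nothing forces $s_i^- \leadsto t$. A branch of $\phi$ rooted at that position and using $t$ then has no image under your~$\tilde\sigma$, so the claimed embedding with $\tilde\sigma(x) = b'$ need not exist. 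The paper's twin construction avoids this entirely: rather than replicating a walk edge by edge, it hands $a_1$ \emph{all} of $a$'s non-$r$ neighbors at once, so every atom of $A$ transfers uniformly without any inductive mirroring argument. Your ``symmetric'' dismissal of the $\beta \neq x$ case is also not justified and inherits the same difficulty on the $b$-side.
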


\begin{proof}
We first prove the second part. We proceed by contradiction. 
Assume that there is a non-redundant plan $\pi_a(x)$ which is an equivalent rewriting of $q(a, x)$ and which contains a constant $b \neq a$.
By Property~\ref{property-call-sequence}, the constant $b$ is not used as the input to a call (this is only the case of $a$, in atom $c_0$), so $b$ must be used as an output filter in $\pi_a$.
Now, consider the database $I = \{r(a, a')\}$, and let $I^*$ be the result of applying the chase by $\mathcal{UID}$ to~$I$.
The result of the query $q$ on~$I^*$ is $a'$, and $I^*$ satisfies $\mathcal{UID}$ by definition, however $b$ does not appear in~$I^*$ so $\pi_a$ does not return anything on~$I^*$ (its semantics cannot have a match), a contradiction.

We now prove the first part of the lemma. We use the form of Property~\ref{property-call-sequence}.
If we separate the body atoms where $a$ is an argument from those where both arguments are variables, we can write: $q'(a,x) \leftarrow A(a, x_1, x_2, \ldots x_n), B(x_1, x_2, \ldots x_n)$ where $A(a, x_1, x_2, \ldots x_n) \leftarrow r_1(a,x_1), \ldots r_n(a,x_n)$ (if we have an atom $r_i(x, a)$ we transform it into $r^-_i(a, x)$) and $a$ does not appear as argument in any of the body atoms of  $B(x_1, x_2, \ldots x_n)$. By contradiction, assume that we have $r_i \neq r$ for all $1 \leq i \leq n$.

Let $I_0$ be the database containing the single fact $r(a,b)$ and consider the database $I^*_0$ obtained by chasing the fact $r(a,b)$ by the inclusion dependencies in $\mathcal{UID}$, creating a new value to instantiate every missing fact.
Let $I^*_1= I^*_0  \cup \{r(a_1,b_1)\} \cup \{r_i(a_1, c_i) \mid r_i(a, c_i) \in I^*_0 \wedge r_i \neq r  \} \cup \{r_i(b_1, c_i) \mid r_i(b, c_i) \in I^*_0 \wedge r_i \neq r^- \}$.   By construction, $I^*_1$ satisfies  $\mathcal{UID}$.   Also, we have that $\forall r_i \neq r$,  $r_i(a,c_i) \in  I^*_1 \Leftrightarrow r_i(a_1,c_i) \in I^*_1$.  Hence,  we have that $A(a, x_1, x_2, \ldots x_n)(I^*_1) = A(a_1, x_1, x_2, \ldots x_n)(I^*_1)$. Then, given that $B(x_1, x_2, \ldots x_n)$ does not contain $a$ nor $a_1$, we have that  $q'(a_1, x)(I^*_1) = q'(a, x)(I^*_1)$. From the  hypothesis we also have that $q'(a, x)(I^*_1)=q(a, x)(I^*_1)$ and $q'(a_1, x)(I^*_1)=q(a_1, x)(I^*_1)$. This implies that $q(a_1, x)(I^*_1)=q(a, x)(I^*_1)$. Contradiction.
\end{proof}

\subsection{Proof that we Can Restrict to Non-Redundant Plans (Property~\ref{prop-redundant})}
\label{proof-prop-redundant}

 We can now prove the property claiming that it suffices to study non-redundant plans. Recall its statement:

\propertyredundantplan*

In what follows, we write $q(a, x)$ instead of $q(x)$ to clarify the inner constant. Let $\pi_a(x)$ be an equivalent plan. From Lemma~\ref{equivalent_r_atom}, we have that its semantics contains a body atom $r(a,x)$ or $r^-(x,a)$. Hence, there is a call $c$ such that $r(a,x)$ or $r^-(x,a)$ appear in its semantics. From the definition of plans, and similarly to the proof of Property~\ref{property-call-sequence},
there is a chain of calls $c_1, c_2, \ldots c_k$ such that $c_1$ takes a constant as input, $c_k=c$, and for every two consecutive calls $c_i$ and $c_{i+1}$, with $i\in \{1, \ldots k-1\}$,  there is a variable $\alpha$ such that $\alpha$  is an output variable for $c_i$ and an output variable for $c_{i+1}$. From Lemma~\ref{equivalent_r_atom}, we have that for all the calls that take a constant as input, the constant is $a$. Hence, the input of $c_1$ is $a$. Let  $\pi'_a(x)$  be the plan consisting of the calls $c_1, c_2, \ldots c_k=c$.  Note that $c$ ensures that $r(a,x)$ or $r^-(x,a)$ appear in the  semantics of $\pi'_a(x)$.

We first notice that by construction $\pi'_a(x)$ is non-redundant.
Now, if we consider the semantics of a plan as a set of body atoms, the semantics of $\pi'_a(x)$ is contained in the semantics of $\pi_a(x)$.  Hence, we have $\forall I,  \pi_a(x)(I) \subseteq \pi'_a(x)(I)$. As $\pi_a(x)$ is equivalent to $q(x) \leftarrow r(a,x)$, $\forall I$, we have  $\pi_a(x)(I) = q(x)(I)$. As $\pi'_a(x)$ contains $r(a,x)$, $\pi'_a(x)(I) \subseteq q(x)(I)$. So, $\forall I,  q(x)(I) = \pi_a(x)(I) \subseteq \pi'_a(x)(I)  \subseteq q(x)(I)$. Hence, all the inclusions are equalities, and indeed $\pi'_a(x)$ is also equivalent to the query under $\mathcal{UID}$. This concludes the proof.

\section{Capturing Languages}
\label{sec-approach}

In this section, we give more formal details on our approach, towards a proof of Theorem~\ref{thm-algorithm} and Proposition~\ref{proposition-verification}. We will show that we can restrict ourselves to a class of execution plans called \textit{minimal filtering plans} which limit the possible filters in an execution plan. Finally, we will define the notion of \emph{capturing language} and show that the language $\mathcal{L}_q$ defined in Section~\ref{sec:thealgorithm} is capturing (Theorem~\ref{thm-main-result}); and define the notion of a language \emph{faithfully representing plans} and show that the language of the regular expression $P_r$ faithfully represents plans (Theorem~\ref{thm-faithful}).
This appendix gives a high-level overview and states the theorem; the next appendix (Appendix~\ref{ap:proofs}) contains proofs for the present appendix; and the last appendix (Appendix~\ref{apx:main-text-proofs}) contains the proofs of the claims made in Sections~\ref{sec:problem} and \ref{sec:thealgorithm}.

\subsection{Minimal Filtering Plans}

Remember the definition of well-filtering plans (Definition~\ref{def:wellfiltering}). We now
simplify even more the filters that should be applied to an equivalent plan, to limit ourselves to a single filter, by introducing \emph{minimal filtering plans}.

\begin{Definition}[Minimal Filtering Plan]
Given a well-filtering plan $\pi_a(x)$ for an atomic query $q(a,x) \leftarrow r(a,x)$, let the \textit{minimal filtering plan associated to $\pi_a(x)$} be the plan $\pi'_a(x)$ that results from removing all filters from $\pi_a(x)$ and doing the following:
\begin{itemize}
    \item We take the greatest possible call $c_i$ of the plan, and the greatest possible output variable $x_j$ of call~$c_i$, such that adding a filter on~$a$ to variable $x_j$ of call~$c_i$ yields a well-filtering plan, and define $\pi_a'(x)$ in this way.
    \item If this fails, i.e., there is no possible choice of $c_i$ and $x_j$, then we leave $\pi_a(x)$ as-is, i.e., $\pi'_a(x) = \pi_a(x)$.
    \end{itemize}
\end{Definition}

Note that, in this definition, we assume that the atoms in the semantics of each function follow the order in the definition of the path function. Also, note that the minimal filtering plan  $\pi_a'(x)$ associated to a well-filtering plan is always itself well-filtering. This fact is evident if the first case in the definition applies, and in the second case, given that $\pi_a(x)$ was itself well-filtering, the only possible situation is when the first atom of the first call of $\pi_a(x)$ was an atom of the form $r(a, x)$, with $a$ being the input element: otherwise $\pi_a(x)$ would not have been well-filtering. So, in this case, $\pi_a'(x)$ is well-filtering. Besides, note that, when the well-filtering plan $\pi_a$ is non-redundant, then this is also the case of the minimal filtering plan $\pi_a^{min}$ because the one filter that we may add is necessarily at an output position of the last call.

Finally, note that a well-filtering plan is not always equivalent to the minimal filtering plan, as removing the additional filters can add some results.  However, one can easily check if it is the case or not. This theorem is proven in Appendix~\ref{apx:prf-well-filtering-minimal-construction}.

\begin{restatable}{Theorem}{thmwellfilteringminimalconstruction}\label{thm-well-filtering-minimal-construction}
Given a query $q(x) \leftarrow r(a,x)$, a well-filtering plan $\pi_a$, the associated minimal filtering plan $\pi_a^{min}$ 
and unary inclusion dependencies:
\begin{itemize}
    \item If $\pi_a^{min}$ is not equivalent to $q$, then neither is $\pi_a$. 
    \item If $\pi_a^{min}$ is equivalent to $q$, then we can determine in polynomial time if $\pi_a$ is equivalent to $\pi_a^{min}$
\end{itemize}
\end{restatable}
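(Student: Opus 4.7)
My plan is to handle the two parts separately. Both rely on the soundness of well-filtering plans established by Lemma~\ref{equivalent_r_atom}: since $\pi_a$ and $\pi_a^{min}$ are both well-filtering, their semantics contain the atom $r(a,x)$ or $r^-(x,a)$, so $\pi_a(I) \subseteq q(I)$ and $\pi_a^{min}(I) \subseteq q(I)$ on every $I$ satisfying $\mathcal{UID}$. Non-equivalence to~$q$ therefore always amounts to incompleteness, which is what we will have to establish or refute.

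For Part~1, I prove the contrapositive: if $\pi_a \equiv q$ then $\pi_a^{min} \equiv q$. The plan is to show, on every UID-instance $I$, the containment $\pi_a(I) \subseteq \pi_a^{min}(I)$, from which completeness of $\pi_a^{min}$ will follow immediately. Fix such an $I$ and $b \in \pi_a(I)$, and pick a match $\sigma$ of $\pi_a$'s body in $I$ with $\sigma(x)=b$. Since $\pi_a$ and $\pi_a^{min}$ have the same call sequence, they share the same underlying body atoms, differing only by which existential variables are fixed to~$a$. I then construct a match $\sigma'$ of $\pi_a^{min}$ with $\sigma'(x)=b$ by setting the unique filtered variable of $\pi_a^{min}$ (which, by construction, sits at the latest admissible output of the last call) to~$a$, reusing the values of $\sigma$ elsewhere and, if needed, reassigning the variables that are filtered by $\pi_a$ but free in $\pi_a^{min}$. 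The atom $r(a,b)$ that must be present in the image of $\sigma$ serves as the structural anchor in the last call, and the path structure of non-redundant plans (Property~\ref{property-call-sequence}) together with the ``latest admissible position'' property of $\pi_a^{min}$'s filter ensure that this local remapping does not break any other atom.

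For Part~2, assuming $\pi_a^{min} \equiv q$ and using the containment $\pi_a \subseteq \pi_a^{min}$ from Part~1, the question ``is $\pi_a \equiv \pi_a^{min}$?'' reduces to deciding whether $\pi_a^{min}(I) \subseteq \pi_a(I)$ holds on every UID-instance $I$. I cast this as a conjunctive-query containment problem under $\mathcal{UID}$: freeze the body of $\pi_a^{min}$ into a canonical instance (turning each of its variables into a fresh constant while keeping the input constant $a$ and the output variable $x$), compute the chase of this instance by $\mathcal{UID}$, and test for a homomorphism from the body of $\pi_a$ into the chase preserving $x$ and sending each filter variable of $\pi_a$ to $a$. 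Because $\mathcal{UID}$ consists of unary inclusion dependencies on binary relations, the chase is tree-shaped with bounded branching, and because $\pi_a$ has bounded size, only a polynomial-size prefix of the chase is relevant, which yields a PTIME decision procedure.

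The main obstacle I foresee is the construction of $\sigma'$ in Part~1. When the filter position of $\pi_a^{min}$ does not belong to the set of filter positions of $\pi_a$, identity reassignment does not satisfy $\pi_a^{min}$'s filter, and a delicate local remapping is needed. Making this rigorous requires a careful analysis of how the ``latest admissible position'' property of $\pi_a^{min}$'s filter interacts with the chain of input/output variables of the non-redundant plan, so that the reassignment propagates along the last call without breaking earlier atoms, exploiting the $r(a,b)$ anchor together with any additional facts already present in the image of $\sigma$. For Part~2, the remaining technical care lies in bounding the depth of chase exploration by the size of $\pi_a$, which follows from standard arguments on tree-shaped chases of unary inclusion dependencies.
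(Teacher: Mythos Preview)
Your Part~1 strategy has a genuine gap. You try to prove $\pi_a(I)\subseteq\pi_a^{min}(I)$ on \emph{every} UID-instance $I$ by remapping a match $\sigma$ of $\pi_a$ into a match of $\pi_a^{min}$, forcing the new filter variable $y^*$ to~$a$ and patching nearby variables. But $y^*$ typically occurs in a second atom of the last call, and forcing $y^*=a$ can break that atom irreparably. Take $g(\underline{u},v)=s(u,v)$ and $f(\underline{u},v,x,w,z)=s(u,v),r(v,x),r^-(x,w),t(w,z)$; the well-filtering plan $\pi_a(x)=g(a,u'),f(u',a,x,w,z)$ filters $v=a$, while $\pi_a^{min}$ filters the later admissible position $w=a$. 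On $I=\{s(a,c),s(c,a),r(a,b),r(d,b),t(d,e)\}$ with empty $\mathcal{UID}$, $\pi_a$ returns $b$ (via $w=d$) but $\pi_a^{min}$ needs a fact $t(a,\cdot)$ and returns nothing. Your remapping argument only exploits the hypothesis $\pi_a\equiv q$ to extract $r(a,b)\in I$, and that is not enough: the containment you seek is simply false on arbitrary instances. The paper avoids this entirely by observing that, since both plans are sound, it suffices to check that $\pi_a^{min}$ returns~$b$ on the single instance obtained by chasing $r(a,b)$ under~$\mathcal{UID}$; on that instance the match of $\pi_a$ already sends $y^*$ to~$a$ (there is only one $r(\cdot,b)$ fact in the chase), so no remapping is required.

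For Part~2, the containment $\pi_a\subseteq\pi_a^{min}$ that you invoke does hold, but it follows from soundness of $\pi_a$ together with the hypothesis $\pi_a^{min}\equiv q$, not ``from Part~1''. Your chase-based containment test is a plausible alternative route, but ``polynomial-size prefix of the chase'' is not a valid bound: a depth-$n$ neighbourhood in a UID chase can have size $|\mathcal{R}|^n$, so you would still need to exploit the regular structure of the chase to get PTIME. The paper takes a different and more direct path: it shows (Lemma~\ref{lem:position-filters}, via Property~\ref{B-rule}) that $\pi_a\equiv\pi_a^{min}$ iff, for every filter of $\pi_a$, the root path $r_1\cdots r_n$ from~$a$ to that filter position is derivable from the nonterminal $B_r$ in the grammar~$\mathcal{G}_q$. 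This is a context-free membership test per filter, giving PTIME without touching any chase.
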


This theorem implies that, when the query has a rewriting as a well-filtering plan, then the corresponding minimal filtering plan is also a rewriting:

\begin{Corollary}\label{thm-well-filtering-minimal}
Given unary inclusion dependencies, if a well-filtering plan is a rewriting for an atomic query~$q$, then it is equivalent to the associated minimal filtering plan.
\end{Corollary}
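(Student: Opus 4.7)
The plan is to derive the corollary as an immediate contrapositive consequence of the first bullet of Theorem~\ref{thm-well-filtering-minimal-construction}, together with the transitivity of plan/query equivalence. No new machinery is needed, since the theorem has already been stated and we are allowed to use it.

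First, I would fix notation: let $\pi_a$ be a well-filtering plan that is an equivalent rewriting of the atomic query $q(x)\leftarrow r(a,x)$ under $\mathcal{UID}$, and let $\pi_a^{min}$ denote the minimal filtering plan associated with $\pi_a$ according to the construction given just before the theorem. Recall from the remarks following that construction that $\pi_a^{min}$ is itself well-filtering, so that the hypotheses of the theorem apply to it.

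Next, I would invoke the first bullet of Theorem~\ref{thm-well-filtering-minimal-construction} in its contrapositive form: that bullet states that if $\pi_a^{min}$ is not equivalent to $q$, then $\pi_a$ is not equivalent to $q$ either. Taking the contrapositive, since we assume that $\pi_a$ is equivalent to $q$, it follows that $\pi_a^{min}$ is also equivalent to $q$. Combining these two equivalences, for every instance $I$ satisfying $\mathcal{UID}$ we have $\pi_a(I) = q(I) = \pi_a^{min}(I)$, and therefore $\pi_a$ and $\pi_a^{min}$ produce the same answers on all admissible instances, which is by definition exactly what it means for two execution plans to be equivalent.

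The main obstacle is not in the corollary itself, which is essentially a two-line deduction, but in justifying the first bullet of Theorem~\ref{thm-well-filtering-minimal-construction} on which the whole argument rests. There one must argue that the additional filters which are stripped out when passing from $\pi_a$ to $\pi_a^{min}$ cannot, by accident, be responsible for the correctness of $\pi_a$: the strategic placement of a single filter on the latest possible output of the latest possible call, prescribed by the construction of $\pi_a^{min}$, must already suffice to pin all relevant results back to the input constant $a$ in the same way that the more liberal filtering in $\pi_a$ did. Once that theorem is granted, the corollary follows for free by contraposition and transitivity, as sketched above.
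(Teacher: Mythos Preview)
Your proposal is correct and follows essentially the same approach as the paper: the paper's own proof is a two-line argument that takes the contrapositive of the first bullet of Theorem~\ref{thm-well-filtering-minimal-construction} to conclude that $\pi_a^{min}$ is equivalent to~$q$, and then uses transitivity to get that $\pi_a$ and $\pi_a^{min}$ are equivalent. Your added remarks about where the real work lies (in the theorem, not the corollary) are accurate but not part of the proof proper.
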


\begin{proof}
    This is the contrapositive of the first point of the theorem: if $\pi_a$ is equivalent to $q$, then so in $\pi_a^{min}$, hence $\pi_a$ and $\pi_a^{min}$ are then equivalent.
\end{proof}

For that reason, to study equivalent rewritings, we will focus our attention on minimal filtering plans: Theorem~\ref{thm-well-filtering-minimal} can identify other well-filtering plans that are rewritings, and we know by Lemma~\ref{equivalent_r_atom} that plans that are not well-filtering cannot be rewritings.

\subsection{Path Transformations}

We now show how to encode minimal filtering plans as words over an alphabet whose letters are the relation names in $\mathcal{R}$.
The key is to rewrite the plan so that its semantics is a path query of the following form:

Here is the formal notion of a \emph{path query}:

\begin{Definition}\label{skeleton}
A \textbf{path  query} is a query of the form
\[q_a(x_i)\leftarrow r_1(a,x_1), r_2(x_1,x_2), ..., r_{n}(x_{n-1},x_n)\]
where $a$ is a constant, $x_i$ is the output variable,
each $x_j$ except $x_i$ is either a variable or the constant~$a$, and $1 \leq i \leq n$.
The sequence of relations $r_1...r_n$ is called the \textbf{skeleton} of the query. 
\end{Definition}

We formalize as follows the transformation that transforms plans into path queries. We restrict it to non-redundant minimal filtering plans to limit the number of filters that we have to deal with:

\begin{Definition}[Path Transformation]
\label{def:pathtransformation}
Let $\pi_a(x)$ be a non-redundant minimal filtering execution plan and $\mathcal{R}$ a set of relations. 
We define the path transformation of $\pi_a(x)$, written $\mathcal{P}'(\pi_a)$, the transformation that maps the plan $\pi_a$ to a path query $\mathcal{P}'(\pi_a)$ obtained by applying the following steps:

\begin{enumerate}
    \item Consider the sequence of calls $c_0, c_1,..., c_k$ as defined in Property~\ref{property-call-sequence}, removing the one filter to element~$a$ if it exists.
    \item For each function call
    $c_i(\underline{y_1}, y_{i_1}..., y_{i_j},...y_{i_n}) = r_1(y_1, y_2), ..., r_k(y_k, y_{k+1}), ...,$ $r_m(y_m,$ $y_{m+1})$ in $\pi_a$ with $1 < i_1 < ... < i_n < m+1$,
    such that $y_{i_j}$ is the output used as input by the next call or is the output of the plan,
    we call the \emph{sub-semantics associated to $c_i$} the query: $r_1...r_m.r_m^-...r_{i_j}^-(y_1, ..., y_{i_j-1}, y_{i_j}',..., y_m', y_{m+1}, ..., y_{i_j})$,
    where $y_{i_j}', ..., y_m'$ are new variables.
    We do nothing if ${i_j} = m+1$.

    \item Concatenate the sub-semantics associated to the calls in the order of the sequence of calls. We call this new query the \emph{path semantics}.
    \item There are two cases:
       \begin{itemize}
           \item If the semantics of $\pi_a$ contains the atom $r(a, x)$ (either thanks to a filter to the constant~$a$ on an output variable or thanks to the first atom of the first call with $a$ being the input variable), then this atom must have been part of the semantics of the last call (in both cases). The sub-semantics of the last call is therefore of the form $\ldots, r(x_a, x'), r_2(x', x_2), \ldots, r_n(x_{n-1}, x_n), r_n^-(x_n, x_{n-1}), \ldots, r_2^-(x_2, x)$, in which $x_a$ was the variable initially filtered to $a$ (or was the input to the plan, in which case it is still the constant~$a$) and we append the atom $r^-(x, a)$ with a filter on~$a$, where $x$ is the output of the path semantics.
           \item Otherwise, the semantics of $\pi_a$ contains an atom $r^-(x, a)$, then again it must be part of the last call whose sub-semantics looks like $\ldots, r^-(x', x_2'), r_2(x_2', x_3'),...,r_n(x_{n-1}', x_n),r_n^-(x_n, x_{n-1}),..., r(x_1, x)$, in which $x_2'$ was the variable initially filtered to $a$, and we replace the last variable $x_1$ by $a$, with $x$ being the output of the path semantics.
       \end{itemize}
\end{enumerate}
\end{Definition}

We add additional atoms in the last point to ensure that the filter on the last output variable is always on the last atom of the query. Notice that the second point relates to the words introduced in Definition~\ref{possible-plan}.

The point of the above definition is that, once we have rewritten a plan to a path query, we can easily see the path query as a word in $\mathcal{R}^*$ by looking at the skeleton. Formally:

\begin{Definition}[Full Path Transformation]
\label{def:fullpathtransformation}
  Given a non-redundant minimal filtering execution plan, writing $\mathcal{R}$ for the set of relations of the signature, we denote by $\mathcal{P}(\pi_a)$ the word over $\mathcal{R}$ obtained by keeping the skeleton the path query $\mathcal{P}'(\pi_a)$ and we call it the \emph{full path transformation}.
\end{Definition}

Note that this loses information about the filters, but this is not essential.
\begin{Example}
Let us consider the two following path functions:
\begin{align*}
    f_1(x, y) & = s(x, y), t(y, z)\\
    f_2(x, y, z) &= s^-(x, y), r(y, z), u(z, z')
\end{align*}

\noindent The considered atomic query is $q(x) \leftarrow r(a, x)$. We are given the following non-redundant minimal filtering execution plan:
\[
    \pi_{a}(x) = f_1(a, y), f_2(y, a, x)
\]
\noindent We are going to apply the path transformation to $\pi_a$. Following the different steps, we have:

\begin{enumerate}
    \item The functions calls without filters are:
    \begin{align*}
        c_0(a, y) & = s(a, y), t(y, z)\\
        c_1(y, z, x) &= s^-(y, z), r(z, x), u(x, z_1)
    \end{align*}
    \item The sub-semantics associated to each function call are:
    \begin{itemize}
        \item For $c_0: s(a, y'), t(y', z), t^-(z, y)$
        \item For $c_1: s^-(y, z), r(z, x'), u(x', z_1), u^-(z_1, x)$
    \end{itemize}
    \item The path semantics obtained after the concatenation is:
    \[
    s(a, y'), t(y', z), t^-(z, y), s^-(y, z), r(z, x'), u(x', z_1), u^-(z_1, x)
    \]
    \item The semantics of $\pi_a$ contained $r(a, x)$, so add the atom $r^-(x, a)$ to the path semantics.
\end{enumerate}
\noindent
At the end of the path transformation, we get 
\[
\mathcal{P}'(\pi_a) = s(a, y'), t(y', z), t^-(z, y), s^-(y, z), r(z, x'), u(x', z_1), u^-(z_1, x), r^-(x, a)
\]
and:
\[
\mathcal{P}(\pi_a) = s, t, t^-, s^-, r, u, u^-, r^-
\]
\end{Example}

This transformation is not a bijection, meaning that possibly multiple plans can generate the same word: %

\begin{Example}
Consider three path functions:
\begin{itemize}
    \item $f_1(x, y) = s(x, y), t(y, z)$,
    \item $f_2(x, y, z) = s^-(x, y) r(y, z)$,
    \item $f_3(x, y, z) = s(x, x_0), t(x_0, x_1), t^-(x_1, x_2), s^-(x_2, x_3), r(x_3, y), r^-(y, z)$,
\end{itemize}
The execution plan $\pi_a^1(x) = f_1(a, y), f_2(y, a, x)$ then has the same image by the path transformation than the execution plan $\pi_a^2(x) = f_3(a, a, x)$.
\end{Example}

However, it is possible to efficiently reverse the path transformation whenever an inverse exists. We show this in Appendix~\ref{transducer}.

\begin{restatable}{Property}{propertytransformation}\label{prop-transformation}
Given a word $w$ in $\mathcal{R}^*$, a query $q(x) \leftarrow r(a, x)$ and a set of path functions, it is possible to know in polynomial time if there exists a non-redundant minimal filtering execution plan $\pi_a$ such that $\mathcal{P}(\pi_a) = w$. Moreover, if such a $\pi_a$ exists, we can compute one in polynomial time, and we can also enumerate all of them (there are only finitely many of them).
\end{restatable}

\subsection{Capturing Language}

The path transformation gives us a representation of a plan in $\mathcal{R}^*$. In this section, we introduce our main result to characterize \textit{minimal filtering plans}, which are atomic equivalent rewritings based on languages defined on $\mathcal{R}^*$. First, thanks to the path transformation, we introduce the notion of \emph{capturing language}, which allows us to capture equivalent rewritings using a language defined on $\mathcal{R}^*$.

\begin{Definition}[Capturing Language]
Let $q(x) \leftarrow r(a, x)$ be an atomic query. The language $\Lambda_q$ over $\mathcal{R}^*$ is said to be a \emph{capturing language} for the query $q$ (or we say that $\Lambda_q$ \emph{captures} $q$) if 
for all non-redundant minimal filtering execution plans $\pi_a(x)$, we have the following equivalence: $\pi_a$ is an equivalent rewriting of $q$ iff we have $\mathcal{P}(\pi_a) \in \Lambda_q$.
\end{Definition}

Note that the definition of capturing language does not forbid the existence of words $w \in \Lambda_q$ that are not in the preimage of~$\mathcal{P}$, i.e., words for which there does not exist a plan $\pi_a$ such that $P(\pi_a) = w$. We will later explain how to find a language that is a subset of the image of the transformation~$\mathcal{P}$, i.e., a language which \emph{faithfully represents plans}. 

Our main technical result, which is used to prove Theorem~\ref{thm-algorithm}, is that we have a context-free grammar whose language captures~$q$: specifically, the grammar $\mathcal{G}_q$ (Definition~\ref{def-grammar}):

\begin{restatable}{Theorem}{thmmainresult}\label{thm-main-result}
Given a set of unary inclusion dependencies, a set of path functions, and an atomic query $q$, the language $\mathcal{L}_q$ captures $q$.
\end{restatable}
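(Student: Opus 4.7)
My plan is to prove the iff statement in two parts: soundness (if $\mathcal{P}(\pi_a) \in \mathcal{L}_q$, then $\pi_a$ is an equivalent rewriting of $q$) and completeness (if $\pi_a$ is an equivalent rewriting, then $\mathcal{P}(\pi_a) \in \mathcal{L}_q$). Both directions revolve around the chase $I^*$ of the singleton instance $\{r(a,b)\}$ by $\mathcal{UID}$, which by Property~\ref{ppt:uidchase} has a tree-like structure where every element admits at most one outgoing fact per relation. The non-terminals of $\mathcal{G}_q$ are designed to match this structure: $B_{r_i}$ is intended to derive the words corresponding to back-and-forth walks that start and end at a position where an outgoing $r_i$-fact is guaranteed (via $\mathcal{UID}$) to exist, and $L_{r_j}$ corresponds to one round-trip excursion along an $r_j$-edge.

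For the soundness direction, I would fix an arbitrary instance $I \models \mathcal{UID}$ and show the two inclusions $\pi_a(I) \subseteq q(I)$ and $q(I) \subseteq \pi_a(I)$. The first inclusion is immediate from Definition~\ref{def:pathtransformation} (step 4) combined with the minimal filtering assumption: the semantics of $\pi_a$ contains an atom $r(a,x)$ or $r^-(x,a)$, so any returned answer is an answer to $q$. For the reverse inclusion, given $b$ with $r(a,b) \in I$, I would prove that $\mathcal{P}'(\pi_a)$ matches in $I$ with output $b$ by establishing the following lemma by induction on the grammar derivation: if $B_{r_i} \Rightarrow^* w$ and $p$ is any element of an instance $J \models \mathcal{UID}$ admitting an outgoing $r_i$-fact, then $w$ matches a closed walk at $p$ in $J$. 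The base case $B_{r_i} \to \epsilon$ is trivial; the inductive case $B_{r_i} \to B_{r_i} L_{r_j}$ uses the hypothesis $r_i \leadsto r_j$ to conclude, via the UID satisfied by $J$, that $p$ has an outgoing $r_j$-fact, allowing the $L_{r_j}$ excursion and a recursive application of the lemma at the endpoint of that edge. Applied to the start rules $S \to B_r r$ and $S \to B_r r B_{r^-} r^-$ at $p = a$, this yields the required match with $x \mapsto b$.

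For the completeness direction, I would invoke the chase $I^*$ of $\{r(a,b)\}$ by $\mathcal{UID}$. Since $\pi_a \equiv q$ and $b \in q(I^*)$, there is a homomorphism $h$ from $\mathcal{P}'(\pi_a)$ to $I^*$ sending the input and output variables to $a$ and $b$, respectively. The tree-like structure granted by Property~\ref{ppt:uidchase} forces the image of the walk to be a well-nested sequence of forward-backward excursions: whenever the walk leaves a vertex and later returns, it must do so by backtracking in reverse along the same edges it took. I would then inductively read off a parse tree for $\mathcal{P}(\pi_a)$ in $\mathcal{G}_q$: each maximal excursion departing from a vertex via an edge labelled $r_j$ is produced by $L_{r_j}$, and the concatenation of such excursions at a vertex admitting an outgoing $r_i$-fact is produced by $B_{r_i}$. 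The final $r$-edge, or the $r$-followed-by-loop-followed-by-$r^-$ segment connecting the walk back to $a$, corresponds respectively to the two $S$-rules and isolates the two cases of step~4 of Definition~\ref{def:pathtransformation}.

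The main obstacle will be the completeness direction, specifically the formal matching between the chase structure and the grammar parse tree. The delicate point is that when the walk takes a fresh excursion from some vertex $p$ via some relation $r_j$, we must certify that this is a legal grammar step, i.e., that the dependency $r_i \leadsto r_j$ used in the rule $B_{r_i} \to B_{r_i} L_{r_j}$ actually belongs to $\mathcal{UID}$. This is where the closure of $\mathcal{UID}$ under implication becomes crucial, since a chain of UIDs used to justify the existence of $r_j$ at $p$ in the chase corresponds to a single production in the grammar. Handling the start and end of the walk carefully, so that the two $S$-rules faithfully capture the two cases of the path transformation, will require a small case analysis tracking whether the final segment to $a$ comes from a forward $r$-atom or a reverse $r^-$-atom in the minimal filtering plan.
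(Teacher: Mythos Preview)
Your overall strategy---structural induction on the grammar derivation for soundness, and reading off a parse tree from a walk in the chase of $\{r(a,b)\}$ for completeness---is exactly the paper's approach, and your identification of the key technical point (that closure of $\mathcal{UID}$ under implication is what makes each chase step correspond to a single production $B_{r_i} \to B_{r_i} L_{r_j}$) is correct. The paper factors the argument slightly differently, proving separately that (i) words of $\mathcal{L}_q$ correspond exactly to \emph{minimal filtering path queries} equivalent to~$q$ (your closed-walk lemma is essentially this), and (ii) $\pi_a \equiv q$ iff $\mathcal{P}'(\pi_a) \equiv q$; but the substance is the same.

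There is, however, one genuine gap in your soundness direction. You write that for arbitrary $I \models \mathcal{UID}$ with $r(a,b) \in I$ you will show that $\mathcal{P}'(\pi_a)$ matches in~$I$ with output~$b$. But you need $\pi_a$ itself to match, and these are different queries: the path transformation replaces the semantics $r_1(y_1,y_2),\ldots,r_m(y_m,y_{m+1})$ of a call (with output at position~$i_j$) by a forward-then-backward path introducing fresh variables $y_{i_j}',\ldots,y_m'$. On an arbitrary instance, a match of the latter need not give a match of the former, because nothing forces $y_k' = y_k$. The paper closes this gap via a separate lemma (its Property on the path transformation): it observes that both $\pi_a$ and $\mathcal{P}'(\pi_a)$ contain an atom $r(a,x)$, so it suffices to compare them on the canonical chase $I_0$ of $\{r(a,b)\}$; there, Property~\ref{ppt:uidchase} guarantees that every path query has at most one binding, which forces $y_k' = y_k$ and makes the two queries coincide. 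You could instead patch your argument directly by noting that your inductively constructed closed walk always retraces the very edge it went out on (this is how $L_{r_j} \to r_j B_{r_j^-} r_j^-$ is used), hence already yields a match of $\pi_a$ and not merely of $\mathcal{P}'(\pi_a)$; but you must say this explicitly.
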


\subsection{Faithfully representing plans}

We now move on to the second ingredient that we need for our proofs: we need a language which \emph{faithfully represents plans}:

\begin{Definition}
We say that a language $\mathcal{K}$ \emph{faithfully represents plans} (relative to a set $\mathcal{F}$ of path functions and an atomic query $q(x) \leftarrow r(a, x)$) if it is a language over $\mathcal{R}$ with the following property: for every word $w$ over $\mathcal{R}$, we have that $w$ is in~$\mathcal{K}$ iff there exists a minimal filtering non-redundant plan $\pi_a$ such that $\mathcal{P}(\pi_a) = w$.
\end{Definition}

We now show the following about the language of our regular expression $P_r$ of possible plans as defined in Definition~\ref{possible-plan}.

\begin{restatable}{Theorem}{thmfaithful}
\label{thm-faithful}
Let $\mathcal{F}$ be a set of path functions, let $q(x) \leftarrow r(a, x)$ be an atomic query, and define the regular expression $P_r$ as in Definition~\ref{possible-plan}. Then the language of~$P_r$ faithfully represents plans.
\end{restatable}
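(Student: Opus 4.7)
The plan is to establish both containments of the equality between $L(P_r)$ and the image of $\mathcal{P}$ on non-redundant minimal filtering plans. Throughout, I rely on the call-sequence decomposition of Property~\ref{property-call-sequence} and the explicit formulas of Definitions~\ref{def:pathtransformation} and~\ref{possible-plan}.

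For the direction from plans to words, I take a minimal filtering non-redundant plan $\pi_a$ with call sequence $c_0, c_1, \ldots, c_k$. For each $j < k$, let $i_j$ be the index of the output variable of $c_j$ fed to $c_{j+1}$; let $i_k$ be the index singled out by step~4 of the path transformation for the last call. Unfolding the sub-semantics formula yields $\mathcal{P}(\pi_a) = w_{f_0, i_0}\, w_{f_1, i_1} \cdots w_{f_k, i_k}$, with the last factor adjusted by step~4. For each $j < k$, non-redundancy and Property~\ref{property-call-sequence} force $0 < i_j \leq n_j$, since $c_{j+1}$ must take as input a genuine output variable of $c_j$ rather than the constant $a$; each such factor belongs to $W$. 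For the last factor, I verify the three ``final'' conditions: the restriction on its last letter follows from the well-filtering property (Lemma~\ref{equivalent_r_atom}), which forces the query atom $r(a,x)$ or $r^-(x,a)$ to appear at the right position; the requirement that $x_{i_k+1}$ is an output variable is automatic since $c_k$ carries the plan's output; and the third condition, forbidding an $r(x_{i_k},x_{i_k+1})\,r^-(x_{i_k+1},x_{i_k+2})$ pattern when $x_{i_k+2}$ is also an output of $f_k$, encodes minimality of the filter placement, because otherwise the filter on $a$ could be postponed to $x_{i_k+2}$, contradicting the choice of the greatest possible position. The single-call case $k = 0$ where the first atom of $c_0$ is already $r(a,x)$ yields $i_0 = 0$ and a word in $W_0$; all other configurations give a word in $W^* W'$.

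For the converse direction, I parse $w \in L(P_r)$ according to the regular expression: either $w \in W_0$, or $w = w_{f_0, i_0} \cdots w_{f_{k-1}, i_{k-1}}\, w_{f_k, i_k}$ with the initial factors in $W$ and the last in $W'$. I reconstruct the plan by chaining: $c_0$ takes $a$ as input, and each $c_j$ for $j > 0$ uses as input the $i_{j-1}$-th output variable of $c_{j-1}$; the plan's output variable and the filter on $a$ (if any) are placed on $c_k$ according to the ``final'' condition satisfied by $w_{f_k, i_k}$, while in the $W_0$ case the single call has $r(a,x)$ as its first atom and needs no explicit filter. The chain structure guarantees non-redundancy, and the condition $0 < i_j$ in $W$ ensures that each call feeds into the next through a genuine output. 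Minimality of the filter placement follows from the third ``final'' condition, since the absence of the forbidden $r \cdot r^-$ pattern rules out postponing the filter to a later output. A direct computation of $\mathcal{P}$ on the reconstructed plan then confirms equality with $w$.

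The hardest part will be showing that the third ``final'' condition exactly captures minimality of filter placement in both directions. Concretely, I must establish that a non-redundant plan is minimal filtering if and only if the last factor of its path transformation does not admit a further $r \cdot r^-$ extension to a later output, which requires carefully tracing the correspondence between atoms in the function body, the filter variable, and the plan's output variable. A secondary subtlety is the clean partition between the single-call family $W_0$ (where the query atom is supplied by the first atom of the plan with no explicit filter) and the family $W^* W'$ (where a filter on $a$ is explicitly placed in the last call), which must be shown to cover every minimal filtering non-redundant plan exactly once without overlap.
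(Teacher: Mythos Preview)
Your proposal is correct and follows essentially the same approach as the paper's proof: both directions are handled by decomposing into the per-call factors $w_{f,i}$, with the same case split between the $W_0$ single-call situation and the $W^*W'$ situation, and the same verification that the three ``final'' conditions encode exactly the well-filtering and minimality constraints on the last call. Your identification of the third ``final'' condition as the crux (it rules out a later legal filter position) matches the paper's treatment; the only difference is that the paper spells out the case analysis on whether the relevant atom is $r(a,x)$ or $r^-(x,a)$ more explicitly, which you will need to do when you flesh out step~4 and the placement of the output/filter on $c_k$.
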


Theorems~\ref{thm-main-result} and~\ref{thm-faithful} will allow us to deduce Theorem~\ref{thm-algorithm} and Proposition~\ref{proposition-verification} from Section~\ref{sec:problem}, as explained in Appendix~\ref{apx:main-text-proofs}.

\section{Proofs for Appendix~\ref{sec-approach}}
\label{ap:proofs}

Let us first define some notions used throughout this appendix.
Recall the definition of a \emph{path query} (Definition~\ref{skeleton}) and of its skeleton.
We sometimes abbreviate the body of the query as $r_1 ... r_n(\alpha, x_1 ... x_n)$. We use the expression \emph{path query with a filter} to refer to a path query where a body variable other than $\alpha$ is replaced by a constant. For example, in Figure~\ref{fig:ex1}, we can have the path query:
\[
q(m,a) \leftarrow \text{sang}(m, s), \text{onAlbum}(s, a)
\]
which asks for the singers with their albums. Its skeleton is \emph{sang.on\-Album}.

Towards characterizing the path queries that can serve as a rewriting, it will be essential to study \emph{loop queries}:
\begin{Definition}[Loop Query]\label{suport-query}
	We call \textbf{loop query} a query of the form: $r_1 ... r_n(a,a) \leftarrow r_1(a, x_1)  ... r_n(x_{n-1},a)$  where a is a constant and $x_1, x_2, ..., x_{n-1}$ are variables such that $x_i = x_j \Leftrightarrow i=j$. 
\end{Definition}

With these definitions, we can show the results claimed in Appendix~\ref{sec-approach}.

\subsection{Proof of Theorem~\ref{thm-well-filtering-minimal-construction}}
\label{apx:prf-well-filtering-minimal-construction}

\thmwellfilteringminimalconstruction*

First, let us show the first point. By Definition~\ref{def:wellfiltering}, we have that $\pi_a$ contains $r(a,x)$ or $r^-(x, a)$. Let us suppose that $\pi_a$ is equivalent to $q$. Let $I$ be the database obtained by taking the one fact $r(a, b)$ and chasing by $\mathcal{UID}$. We know that the semantics of $\pi_a$ has a binding returning $b$ as an answer. We first argue that $\pi_a^{min}$ also returns this answer. As $\pi_a^{min}$ is formed by removing all filters from $\pi_a$ and then adding possibly a single filter, we only have to show this for the case where we have indeed added a filter. But then the added filter ensures that $\pi_a^{min}$ is well-filtering, so it creates an atom $r(a, x)$ or $r^-(x, a)$ in the semantics of $\pi_a^{min}$, so the binding of the semantics of $\pi_a$ that maps the output variable to~$b$ is also a binding of~$\pi_a^{min}$.

We then argue that $\pi_a^{min}$ does not return any other answer. In the first case, as $\pi_a^{min}$ is well-filtering, it cannot return any different answer than $b$ on~$I$. In the second case, we know by the explanation after Definition~\ref{def:wellfiltering} that $\pi_a^{min}$ is also well-filtering, so the same argument applies. Hence, $\pi_a^{min}$ is also equivalent to~$q$, which establishes the first point.

Let us now show the more challenging second point. We assume that $\pi_a^{min}$ is equivalent to $q$.
Recall the definition of a loop query (Definition~\ref{suport-query}) and the grammar $\mathcal{G}_q$ defined in Definition~\ref{def-grammar}, whose language we denoted as $\mathcal{L}_q$.
We first show the following property:

\begin{restatable}{Property}{propertyOfALoopQuery}\label{B-rule}
    A loop query $r_1 ... r_n(a,a)$ is true on all database instances satisfying the unary inclusion dependencies $\mathcal{UID}$ and containing a tuple $r(a,b)$, iff there is a derivation tree in the grammar such that $B_r \xrightarrow[]{*} r_1 ... r_n$.  
\end{restatable}

\begin{proof}
    We first show the backward direction. The proof is by structural induction on the length of the derivation. We first show the base case. If the length is 0, its derivation necessarily uses Rule~\ref{ag8}, and the query $\epsilon(a, a)$ is indeed true on all databases.
    
    We now show the induction step. Suppose we have the result for all derivations up to a length of $n - 1$. We consider a derivation of length $n > 0$. Let $I$ be a database instance satisfying the inclusion dependencies $\mathcal{UID}$ and containing the fact $r(a, b)$. Let us consider the rule at the root of the derivation tree. It can only be Rule~\ref{ag5}. Indeed, Rule~\ref{ag8} only generates words of length 0.
    So, the first rule applied was Rule~\ref{ag5} $B_r \rightarrow B_r L_{r_i}$ for a given UID $r \leadsto r_i$. Then we have two cases.
    
    The first case is when the next $B_r$ does not derive $\epsilon$ in the derivation that we study. Then, there exists $i \in \{2, \ldots, n-1\}$ such that  $B_r \xrightarrow[]{*} r_1 \ldots r_{i-1}$ and $L_{r_i} \xrightarrow[]{*} r_{i} \ldots r_n$ ($L_{r_i}$ starts by $r_i$).
    From the induction hypothesis we have that $r_1 \ldots r_{i-1}(a,a)$ has an embedding in $I$, and so has $r_i \ldots r_n(a, a)$. Indeed, we have $B_{r_i} \rightarrow L_{r_i} \xrightarrow[]{*} r_{i} \ldots r_n$ (as trivially $r_j \leadsto r_j$) and $I$ contains the tuple $r_i(a, c)$ for some constant $c$ (because we have $r \leadsto r_i$). Hence, $r_1...r_n(a, a)$ is true. This shows the first case of the induction step.
    
    We now consider the case where the next $B_{r_i}$ derives $\epsilon$. Note that, as $r(a, b) \in I$, there exists $c$ such that $r_i(a, c) \in I$.
    The next rule in the derivation is $L_{r_i} \rightarrow r_i B_{r_i^-} r_i^-$, then $B_{r_i^-} \xrightarrow[]{*} r_{2} \ldots r_{n-1}$, and $r_1 = r_i$ and $r_n=r_i^-$. By applying the induction hypothesis, we have that $r_2 \ldots r_{n-1}(c,c)$ has an embedding in $I$. Now, given that $r_1(a,c)\in I$ and $r_n(c,a) \in I$ we can  conclude that $r_1 \ldots r_{n}(a,a)$ has an embedding in $I$. This establishes the first case of the induction step. Hence, by induction, we have shown the backward direction of the proof.
    
    We now show the forward direction.
    Let $I_0$ be the database containing the single fact $r(a, b)$ and consider the database $I_0^*$ obtained by chasing the fact $r(a,b)$ by the inclusion dependencies in $\mathcal{UID}$, creating a new null to instantiate every missing fact. This database is generally infinite, and we consider a tree structure on its domain, where the root is the element $a$, the parent of $b$ is $a$, and the parent of every null $x$ is the element that occurs together with $x$ in the fact where $x$ was introduced. Now, it is a well-known fact of database theory~\cite{ALICE} that a query is true on every superinstance of $I_0$ satisfying $\mathcal{UID}$ iff that query is true on the chase $I_0^*$ of $I_0$ by $\mathcal{UID}$. Hence, let us show that all loop queries $r_1 \ldots r_n(a, a)$, which hold in $I_0^*$ are the ones that can be derived from $B_r$.
    
    We show the claim again by induction on the length of the loop query. More precisely, we want to show that, for all $n \geq 0$, for a loop query $r_1...r_n(a, a)$ which is true on all database instances satisfying the unary inclusion dependencies $\mathcal{UID}$ and containing a tuple $r(a,b)$ , we have:
    \begin{enumerate}
        \item $B_r \xrightarrow[]{*} r_1 ... r_n$
        \item For a match of the loop query on $I_0^*$, if no other variable than the first and the last are equal to $a$, then we have: $L_{r_1} \xrightarrow[]{*} r_1 ... r_n$
    \end{enumerate}
    
    If the length of the loop query is 0, then it could have been derived by the Rule~\ref{ag8}. The length of the loop query cannot be $1$ as for all relations $r'$, the query $r'(a, a)$ is not true on all databases satisfying the UIDs and containing a tuple $r(a, b)$ (for example it is not true on $I_0^*$).
    
    Let us suppose the length of the loop query is $2$ and let us write the loop query as $r_1(a, x), r_2(x, a)$ and let $r_1(a, c), r_2(c, a)$ be a match on $I_0^*$. The fact $r_1(a, c)$ can exist on $I_0^*$ iff $r \leadsto r_1$. In addition, due to the tree structure of $I_0^*$, we must have $r_2 = r_1^-$. So, we have $B_r \rightarrow L_{r_1} \rightarrow r_1 B_{r_1^-} r_1^- \rightarrow r_1^-$ and we have show the two points of the inductive claim.
    
    We now suppose that the result is correct up to a length $n - 1$ ($n > 2$), and we want to prove that it is also true for a loop query of length $n$.
    
    Consider a match $r_1(a, a_1), r_2(a_1, a_2), \ldots, r_{n-1}(a_{n-2}, a_{n-1}), r_n(a_{n-1}, a_n)$ of the loop query. Either there is some $i$ such that $a_i = a$, or there is none. If there is at least one, then let us cut the query at all positions where the value of the constant is $a$. We write the binding of the loop queries on $I_0^*$ : $(r_{i_0} \ldots r_{i_1})(a, a).(r_{i_1+1} \ldots r_{i_2})(a, a)...r_{i_{k-1} + 1} \ldots r_{i_k}(a, a)$ (where $1 = i_0 < i_1 < ... < i_{k-1} < i_k = n$). As we are on $I_0^*$, we must have, for all $0 < j < k$, that $r \leadsto r_{i_j}$. So, we can do the derivation : $B_r \rightarrow B_r L_{r_{i_{k-1}}} \rightarrow B_r L_{r_{i_{k-2}}} L_{r_{i_{k-1}}} \xrightarrow[]{*} L_{r_0} \ldots  L_{r_{i_{k-1}}}$. Then, from the induction hypothesis, we have that, for all $0 < j < k$, $L_{r_{i_j}} \xrightarrow[]{*} r_{i_j}...r_{i_{j+1}}$ and so we get the first point of our induction hypothesis.
    
    We now suppose that there is no $i$ such that $a_i = a$. Then, we still have $r \leadsto r_1$. In addition, due to the tree structure of $I_0^*$, we must have $r_n = r_1^-$ and $a_1 = a_{n-1}$. We can then apply the induction hypothesis on $r_2...r_{n-1}(a_1, a_1)$ : if it is true on all database satisfying the unary inclusion dependencies $\mathcal{UID}$ and containing a tuple $r^-(a_1,c)$, then $B_{r_1^-} \xrightarrow[]{*} r_2 ... r_{n-1}$. Finally, we observe that we have the derivation $B_r \xrightarrow[]{*} L_{r_1} \rightarrow r_1 B_{r_1^-} r_1^- \xrightarrow[]{*} r_1 ... r_n$ and so we have shown the two points of the inductive claim.
    
    Thus, we have established the forward direction by induction, and it completes the proof of the claimed equivalence.
\end{proof}

Next, to determine in polynomial time whether $\pi_a$ is equivalent to $\pi_a^{min}$ (and hence $q$),
we are going to consider all positions where a filter can be added. To do so, we need to define the \emph{root path} of a filter:

\begin{Definition}[Root Path]
Let $\pi_a$ be an execution plan. Let us consider a filter mapping a variable $y$ in the plan to a constant. Then, one can extract a unique path query $r_1...r_n(a, y)$ from the semantics of $\pi_a$, starting from the constant $a$ and ending at the variable $y$. We call this path the \emph{root path of the filter}.
\end{Definition}

The existence and uniqueness come from arguments similar to Property~\ref{property-call-sequence}: we can extract a sequence of calls to generate $y$ and then, from the semantics of this sequence of calls, we can extract the root path of the filter. Note that this is different from the definition of the path transformation (Definition~\ref{def:pathtransformation}): for each call $f(x, y_1, \ldots, y_n)$ with semantics $r_1(x, y_1), \ldots, r_n(y_{n-1}, y_n)$, if $y_i$ is the variable used in the next call or the output variable, then in the root path we only keep the path $r_1 \ldots r_i$, i.e., we do not add $r_{i+1} \ldots r_n r_n^- \ldots r_{i+1}^-$ as we did in Definition~\ref{def:pathtransformation}.

This definition allows us to characterize in which case the well-filtering plan $\pi_a$ is equivalent to its minimal filtering plan $\pi_a^{min}$, which we state and prove as the following lemma:

\begin{Lemma}
\label{lem:position-filters}
   Let $q(x) \leftarrow r(a,x)$ be an atomic query, let $\mathcal{UID}$ be a set of UIDs, and let $\pi_a^{min}$ a minimal filtering plan equivalent to $q$ under $\mathcal{UID}$. Then, for any well-filtering plan $\pi_a$ defined for $q$, the plan $\pi_a$ is equivalent to $\pi_a^{min}$ iff for each filter, letting $r_1...r_n(a, a)$ be the loop query defined from the root path of this filter,
   there is a derivation tree such that $B_r \xrightarrow[]{*} r_1 ... r_n$ in the grammar $\mathcal{G}_q$.
\end{Lemma}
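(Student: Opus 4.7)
The plan is to use the chase $I_0^*$ of the single-fact instance $\{r(a,b)\}$ by $\mathcal{UID}$ as a canonical witness, exploiting the at-most-one-outgoing-edge property of the chase (Property~\ref{ppt:uidchase}) to reduce the satisfaction of each filter of $\pi_a$ to the truth on $I_0^*$ of its associated loop query, which is then controlled by Property~\ref{B-rule}. Observe first that $\pi_a$ and $\pi_a^{min}$ share the same set $B$ of non-filter body atoms, differing only in their filter equalities. Since $\pi_a^{min} \equiv q$, we have $\pi_a^{min}(I_0^*) = q(I_0^*) = \{b\}$, and using the non-redundant sequential structure of the calls (Property~\ref{property-call-sequence}) together with the functionality of outgoing edges in the chase, I would first argue that the match of $B$ on $I_0^*$ starting from the input $a$ is uniquely determined, that it outputs $b$, and that it satisfies the added filter of $\pi_a^{min}$.

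For the forward direction I would argue by contraposition. Suppose some filter of $\pi_a$ on a variable $y_i$ has root path inducing a loop query $r_1 \dots r_n(a,a)$ for which no derivation $B_r \xrightarrow{*} r_1 \dots r_n$ exists in $\mathcal{G}_q$. By Property~\ref{B-rule}, this loop query then fails on some instance of $\mathcal{UID}$ containing $r(a, b)$, and in particular on $I_0^*$ itself. Hence the unique match of $B$ on $I_0^*$ does not map $y_i$ to $a$, so no match of the body of $\pi_a$ returns $b$, giving $\pi_a(I_0^*) = \emptyset$ whereas $\pi_a^{min}(I_0^*) = \{b\}$, contradicting $\pi_a \equiv \pi_a^{min}$.

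For the backward direction, assume that each filter of $\pi_a$ yields a loop query derivable from $B_r$. By Property~\ref{B-rule}, each such loop query holds on every $I \models \mathcal{UID}$ containing a fact $r(a, \cdot)$, in particular on $I_0^*$; applied to the unique match of $B$ on $I_0^*$, this forces each filtered variable to equal $a$, so the unique match satisfies all filters $F_1, \dots, F_k$ of $\pi_a$. For an arbitrary $I \models \mathcal{UID}$ and $b' \in q(I)$, universality of the chase provides a homomorphism $h \colon I_0^* \to I$ with $h(a) = a$ and $h(b) = b'$; composing the unique match with $h$ yields a match of the body of $\pi_a$ on $I$ with output $b'$ whose filters remain satisfied (since $h$ fixes $a$), so $b' \in \pi_a(I)$. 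Combined with the containment $\pi_a(I) \subseteq q(I)$ granted by the well-filtering hypothesis via Lemma~\ref{equivalent_r_atom}, this yields $\pi_a(I) = q(I) = \pi_a^{min}(I)$.

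The main obstacle will be carefully establishing the uniqueness of the match of $B$ on $I_0^*$ from $a$: non-redundant plans have a sequential call structure with possibly several output variables per call, and one must trace each call's body along the deterministic outgoing edges of the chase, handling the extra outputs and any branching carefully, to rule out alternative bindings. Once this uniqueness is in hand, Property~\ref{B-rule} and the universality of the chase propagate the characterization straightforwardly from $I_0^*$ to every instance satisfying $\mathcal{UID}$.
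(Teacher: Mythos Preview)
Your proposal is correct and follows essentially the same route as the paper: both directions hinge on the chase $I_0^*$ of $\{r(a,b)\}$, Property~\ref{B-rule}, and the uniqueness of bindings in $I_0^*$ guaranteed by Property~\ref{ppt:uidchase}. Your explicit appeal to the universal homomorphism of the chase in the backward direction is a clean alternative to the paper's introduction of the intermediate plan $\pi_a^{\text{no filter}}$, and it lets you bypass the paper's separate case analysis for when $\pi_a^{min}$ carries a filter not already present among the filters of~$\pi_a$.
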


It is easy to show the second point of Theorem~\ref{thm-well-filtering-minimal-construction} once we have the lemma. We have a linear number of filters, and, for each of them, we can determine in PTIME if $B_r$ generates the root path. So, the characterization can be checked in PTIME over all filters, which allows us to know if $\pi_a$ is equivalent to $\pi_a^{min}$ in PTIME, as claimed.

Hence, all that remains to do in this appendix section to establish Theorem~\ref{thm-well-filtering-minimal-construction} is to prove Lemma~\ref{lem:position-filters}. We now do so:

\begin{proof}
     We consider a filter and the root query $r_1...r_n(a, a)$ obtained from its root path.
     
     We first show the forward direction. Let us assume that $\pi_a$ is equivalent to $\pi_a^{min}$. Then, $\pi_a$ is equivalent to $q$, meaning that the loop query $r_1 ... r_n(a,a)$ is true on all database instances satisfying the unary inclusion dependencies and containing a tuple $r(a,b)$. So, thanks to Property~\ref{B-rule}, we conclude that $B_r \xrightarrow[]{*} r_1 ... r_n$.

We now show the more challenging backward direction. Assume that, for all loop queries $r_1...r_n(a, a)$ obtained from the loop path of each filter, there is a derivation tree such that $B_r \xrightarrow[]{*} r_1 ... r_n$. We must show that $\pi_a^{min}$ is equivalent to~$\pi_a$, i.e., it is also equivalent to~$q$. Now, we know that $\pi_a^{min}$ contains an atom $r(a, x)$ or~$r^-(x, a)$, so all results that it returns must be correct. All that we need to show is that it returns all the correct results. It suffices to show this on the canonical database: let $I$ be the instance obtained by chasing the fact $r(a, b)$ by the unary inclusion dependencies. As $\pi_a^{min}$ is equivalent to~$q$, we know that it returns~$b$, and we must show that $\pi_a$ also does. We will do this using the observation that all path queries have at most one binding on the canonical database, which follows from Property~\ref{ppt:uidchase}.

Let us call $\pi_a^{\text{no filter}}$ the execution plan obtained by removing all filters from $\pi_a$.
As we have $B_r \xrightarrow[]{*} r_1 ... r_n$ for all root paths, we know from Property~\ref{B-rule} that $r_1...r_n(a, a)$ is true on all databases satisfying the UIDs, and in particular on $I$. In addition, on $I$, $r_1...r_n(a, x_1,... x_n)$ has only one binding, which is the same than $r_1...r_n(a, x_1,... x_{n-1}, a)$. So, the filters of~$\pi_a$ do not remove any result of~$\pi_a$ on $I$ relative to~$\pi_a^{\text{no filter}}$: as the reverse inclusion is obvious, we conclude
that $\pi_a$ is equivalent to $\pi_a^{\text{no filter}}$ on $I$.

Now, if $\pi_a^{min}$ contains no filter or contains a filter which was in $\pi_a$, we can apply the same reasoning and we get that $\pi_a^{min}$ is equivalent to $\pi_a^{\text{no filter}}$ on $I$, and so $\pi_a^{min}$ and $\pi_a$ are equivalent in general.

The only remaining case is when $\pi_a^{min}$ contains a filter which is not in~$\pi_a$. In this case,
we have that the semantics of $\pi_a$ contains two consecutive atoms $r(a, x)r^-(x, y)$ where one could have filtered on $y$ with $a$ (this is what is done in $\pi_a^{min}$). Let us consider the root path of~$\pi$ to $y$. It is of the form $r_1...r_n(a, a) r(a, x) r^-(x, y)$. We have $B_r \xrightarrow[]{*} r_1 ... r_n$ by hypothesis. In addition, as $r \leadsto r$ trivially, we get $B_r \rightarrow B_r L_r \rightarrow B_r r r^- \xrightarrow[]{*} r_1...r_n.r.r^-$. So, $r_1...r_n.r.r^-(a,a)$ is true on $I$ (Property~\ref{B-rule}). Using the same reasoning as before, $\pi_a^{min}$ is equivalent to $\pi_a^{\text{no filter}}$ on $I$, and so $\pi_a^{min}$ and $\pi_a$ are equivalent in general.
This concludes the proof.

\end{proof}

\subsection{Proof of Property~\ref{prop-transformation}}
\label{transducer}

We show that we can effectively reverse the path transformation, which will be crucial to our algorithm:

\propertytransformation*

We are going to construct a finite-state transducer that can reverse the path transformation and give us a sequence of calls. To find one witnessing plan, it will suffice to take one run of this transducer and take the corresponding plan, adding a specific filter which we know is correct. If we want all witnessing plans, we can simply take all possible outputs of the transducer.

To construct the transducer, we are going to use the regular expression $P_r$ from Definition~\ref{possible-plan}. We know that $P_r$ faithfully represents plans (Theorem~\ref{thm-faithful}), and it is a regular expression. So we will be able to build an automaton from $P_r$ on which we are going to add outputs to represent the plans.

The start node of our transducer is $S$, and the final node is $F$. The input alphabet of our transducer is $\mathcal{R}$, the set of relations. The output alphabet is composed of function names $f$ for $f \in \mathcal{F}$, the set of path functions, and of output symbols $OUT_i$, which represents the used output of a given function. We explain later how to transform an output word into a non-redundant minimal filtering plan.

First, we begin by creating chains of letters from the $w_{f,i}$ defined in Definition~\ref{possible-plan}. For a word $w_{f,i} = r_1...r_k$ (which includes the reverse atoms added at the end when $0 \leq i < n$), this chain reads the word $r_1...r_k$ and outputs nothing.

Next, we construct $W_0$ between two nodes representing the beginning and the end of $W_0$: $S_{W_0}$ and $F_{W_0}$. From $S_{W_0}$ we can go to the start of the chain of a final $w_{f, 0}$ by reading an epsilon symbol and by outputting the function name $f$. Then, at the end of the chain of a final $w_{f, 0}$, we go to $F_{W_0}$ by reading an epsilon symbol and by outputting a $OUT_1$ letter.

Similarly, we construct $W'$ between two nodes representing the beginning and the end of $W'$: $S_{W'}$ and $F_{W'}$. From $S_{W'}$ we can go to the beginning of the chain of a final $w_{f, i}$ with $0 < i < n$ (as explained in Definition~\ref{possible-plan}) by reading an epsilon symbol and by outputting the function name $f$. Then, at the end of the chain of a final $w_{f, i}$, we go to $F_{W'}$ by reading an epsilon symbol. The output symbol of the last transition depends on the last letter of $w_{f, i}$: if it is $r$, then we output $OUT_i$; otherwise, we output $OUT_{i+1}$. This difference appears because we want to create a last atom $r(a, x)$ or $r^-(x, a)$, and so our choice of output variable depends on which relation symbol we have.

Last, using the same method again, we construct $W$ between two nodes representing the beginning and the end of $W$: $S_{W}$ and $F_{W}$. From $S_{W}$ we can go to the beginning of the chain of a $w_{f, i}$ with $0 < i \leq n$ (as explained in Definition~\ref{possible-plan}) by reading an epsilon symbol and by outputting the function name $f$. Then, at the end of the chain of a final $w_{f, i}$, we go to $F_{W'}$ by reading an epsilon symbol and outputting $OUT_i$. In this situation, there is no ambiguity on where the output variable is.

Finally, we can link everything together with epsilon transitions that output nothing. We construct $W^*$ thanks to epsilon transitions between $S_{W}$ and $F_{W}$. Then, $W^*W'$ is obtained by linking $F_{W}$ to $S_{W'}$ with an epsilon transition. We can now construct $P_r = W_0|(W^*W')$ by adding an epsilon transition between $S$ and $S_{W_0}$, $S$ and $S_W$, $F_{W_0}$ and $F$ and $F_{W'}$ and $F$.

We obtain a transducer that we call $\mathcal{T}_{reverse}$.

Let $w$ be a word of $\mathcal{R}^*$. To know if there is a non-redundant minimal filtering execution plan $\pi_a$ such that $\mathcal{P}(\pi_a) = w$, one must give $w$ as input to $\mathcal{T}_{reverse}$. If there is no output, there is no such plan $\pi_a$. Otherwise, $\mathcal{T}_{reverse}$ nondeterministically outputs some words composed of an alternation of function symbols $f$ and output symbols $OUT_i$. From this representation, one can easily reconstruct the execution plan: The function calls are the $f$ from the output word and the previous $OUT$ symbol gives their input. If there is no previous $OUT$ symbol (i.e., for the first function call), the input is $a$.
If the previous $OUT$ symbol is $OUT_k$, then the input is the $k^{th}$ output of the previous function. The last $OUT$ symbol gives us the output of the plan. We finally add a filter with $a$ on the constructed plan to get an atom $r(a, x)$ or $r^-(x, a)$ in its semantics in the last possible atom, to obtain a minimal filtering plan. Note that this transformation is related to the one given in the proof of Theorem~\ref{thm-faithful} in Section~\ref{proof-thm-faithful}, where it is presented in a more detailed way.

Using the same procedure, one can enumerate all possible output words for a given input and then obtain all non-redundant minimal filtering execution plans $\pi_a$ such that $\mathcal{P}(\pi_a) = w$. We can understand this from the proof of Theorem~\ref{thm-faithful} in Section~\ref{proof-thm-faithful}, which shows that there is a direct match between the representation of $w$ as words of $w_{f,i}$ and the function calls in the corresponding execution plan. 
Last, the reason why the set of output words is finite is because the transducer must at least read an input symbol to generate each output symbol.

\subsection{Proof of Theorem~\ref{thm-main-result}}

In this appendix, we finally show the main theorem of Appendix~\ref{sec-approach}:

\thmmainresult*

Recall that $\mathcal{L}_q$ is the language of the context-free grammar $\mathcal{G}_q$ from Definition~\ref{def-grammar}. Our goal is to show that it is a capturing language.

In what follows, we say that two queries are \emph{equivalent} under a set of UIDs if they have the same results on all databases satisfying the UIDs.

\subsubsection{Linking $\mathcal{L}_q$ to equivalent rewritings.}

In this part, we are going to work at the level of the words of $\mathcal{R}^*$ ending by a $r$ or $r^-$ (in the case $q(x) \leftarrow r(a, x)$ is the considered query), where $\mathcal{R}$ is the set of relations. Recall that the full path transformation (Definition~\ref{def:fullpathtransformation}) transforms an execution plan into a word of $\mathcal{R}^*$ ending by an atom $r$ or $r^-$. Our idea is first to define which words of $\mathcal{R}^*$ are interesting and should be considered. In the next part, we are going to work at the level of functions.

For now, we start by defining what we consider to be the ``canonical'' path query associated to a skeleton. Indeed, from a skeleton in $\mathcal{R}^*$ where $\mathcal{R}$ is the set of relations, it is not clear what is the associated path query (Definition~\ref{skeleton}) as there might be filters. So, we define:

\begin{Definition}[Minimal Filtering Path Query]\label{def-loop-query}
 Given an atomic query $q(a,x) \leftarrow r(a,x)$, a set of relations $\mathcal{R}$ and a word $w \in \mathcal{R}^*$ of relation names from~$\mathcal{R}$ ending by $r$ or $r^-$, the \textbf{minimal filtering path query} of $w$ for $q$ is the path query of skeleton $w$ taking as input $a$ and having a filter such that its last atom is either $r(a,x)$ or $r^-(x,a)$, where $x$ is the only output variable.
\end{Definition}

As an example, consider the word  \emph{on\-Album.on\-Album$^-$.sang$^-$}. The minimal filtering path query is: \emph{$q'($Jailhouse, $x) \leftarrow$ on\-Album$($Jailhouse,$y)$, on\-Album$^-(y,$ Jailhouse$)$, sang$^-($Jailhouse$, x)$}, which is an equivalent rewriting of the atomic query  \emph{sang$^-($Jailhouse$, x)$}.

We can link the language $\mathcal{L}_q$ of our context-free grammar to the equivalent rewritings by introducing a corollary of Property~\ref{B-rule}:

\begin{restatable}{Corollary}{corollaryForwardBackwardQuery} \label{lemma-fwd-backward} Given an atomic query $q(a,x) \leftarrow r(a,x)$ and a set $\mathcal{UID}$ of UIDs, the minimal filtering path query of any word in $\mathcal{L}_q$
is a equivalent to $q$. Reciprocally, for any query equivalent to $q$ that could be the minimal filtering path query of a path query of skeleton $w$ ending by $r$ or $r^-$, we have that $w \in \mathcal{L}_q$. 
\end{restatable}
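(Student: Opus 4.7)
The plan is to prove both directions by case analysis on the last letter of $w$ (which is either $r$ or $r^-$), with Property~\ref{B-rule} doing the heavy lifting in all four sub-cases.

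For the forward direction, the only productions of $\mathcal{G}_q$ with $S$ on the left-hand side are $S \to B_r\, r$ and $S \to B_r\, r\, B_{r^-}\, r^-$, so a word $w \in \mathcal{L}_q$ has the form $w = u \cdot r$ with $B_r \xrightarrow{*} u$, or $w = u \cdot r \cdot v \cdot r^-$ with $B_r \xrightarrow{*} u$ and $B_{r^-} \xrightarrow{*} v$. The minimal filtering path query in each case ends with $r(a, x)$ or $r^-(x, a)$, both of which force any returned $x$ to satisfy $r(a, x)$, so soundness is automatic. For completeness, given a UID-satisfying instance $I$ with $r(a, b) \in I$, I would build a match returning $b$: in the first case Property~\ref{B-rule} supplies a match for the loop $u(a, a)$ and the final $r$-atom maps to $r(a, b)$; in the second case, two applications of Property~\ref{B-rule}, one to $B_r$ for a loop at $a$ of shape $u$ and one to $B_{r^-}$ using $r^-(b, a) \in I$ for a loop at $b$ of shape $v$, combine to yield a match in which the middle $r$-step and the final $r^-$-step are both realised by the fact $r(a, b)$ itself.

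For the backward direction, I assume the minimal filtering path query of $w$ is equivalent to $q$. The case $w = w' \cdot r$ is direct: the minimal filtering path query is a loop query $w'(a, a)$ followed by $r(a, x)$, so completeness forces $w'(a, a)$ to hold on every UID-satisfying database containing some tuple $r(a, b)$, and Property~\ref{B-rule} yields $B_r \xrightarrow{*} w'$, placing $w$ in $\mathcal{L}_q$ via $S \to B_r\, r$. The case $w = w' \cdot r^-$ is the main obstacle, because completeness now only gives a walk from $a$ to $b$ with skeleton $w'$, which is not a loop, and so Property~\ref{B-rule} does not apply directly.

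To handle this last case, I would work inside the canonical UID-model, namely the chase $I^*$ of $\{r(a, b)\}$, which has a tree-like shape: two trees rooted at $a$ and at $b$ joined only by the single edge $r(a, b)$. Any walk from $a$ to $b$ in $I^*$ must therefore cross this edge, and I would pick the position $k$ of the \emph{last} forward crossing (the last step using $r$ in the direction $a \to b$). After step $k$ the walk must remain in the $b$-subtree, since any later backward crossing would require yet another forward crossing to end at $b$, contradicting the choice of $k$; symmetrically, the prefix lives on the $a$-side and returns to $a$ at position $k-1$. This decomposes $w' = u \cdot r \cdot v$ with $u$ a genuine loop at $a$ and $v$ a genuine loop at $b$ inside $I^*$. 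Canonicity of the chase (already used in the proof of Property~\ref{B-rule}) lifts these loops to truth on every UID-satisfying database containing $r(a, b)$ (resp.\ $r^-(b, a)$), and Property~\ref{B-rule} applied to $B_r$ and to $B_{r^-}$ respectively yields $B_r \xrightarrow{*} u$ and $B_{r^-} \xrightarrow{*} v$; the production $S \to B_r\, r\, B_{r^-}\, r^-$ then places $w$ in $\mathcal{L}_q$, completing the proof.
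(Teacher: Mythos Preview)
Your forward direction and the first backward case ($w$ ending in $r$) match the paper essentially verbatim. The difference is in the backward case $w = w' \cdot r^-$. The paper takes a shorter, purely syntactic route: the body of the minimal filtering path query with skeleton $r_1\ldots r_n\, r^-$ is \emph{itself} a loop query from $a$ to $a$ (its last atom is $r^-(x,a)$), so Property~\ref{B-rule} applies in one shot and yields $B_r \xrightarrow{*} r_1 \ldots r_n\, r^-$. The paper then analyses this derivation in~$\mathcal{G}_q$: the trailing $r^-$ can only be produced by Rule~\ref{ag7} as $L_{r} \to r\, B_{r^-}\, r^-$, and that $L_r$ can only have been introduced by Rule~\ref{ag5} as $B_r \to B_r L_r$ via the trivial UID $r \leadsto r$; this immediately gives the decomposition $B_r \xrightarrow{*} u$, $B_{r^-} \xrightarrow{*} v$ with $w = u\, r\, v\, r^-$, matching Rule~\ref{ag4}.

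Your semantic route---working inside the chase $I^*$, splitting the $a\to b$ walk at its last forward $r$-crossing, and then invoking Property~\ref{B-rule} separately for the $a$-loop $u$ and the $b$-loop $v$---is also correct. One small slip: the prefix need not ``live on the $a$-side'' (it may visit $b$ via earlier crossings and come back); what you actually need, and do state, is only that the walk is at $a$ at position $k-1$, which holds because step $k$ is a forward crossing. The trade-off is that the paper does the decomposition once at the grammar level and never re-enters the chase, whereas your argument is more geometric and uses the tree structure of~$I^*$ explicitly but requires two separate appeals to Property~\ref{B-rule} and to canonicity.
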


Notice that the minimal filtering path query of a word in $\mathcal{L}_q$ is well defined as all the words in this language end by $r$ or $r^-$.

\begin{proof}
    We first suppose that we have a word $w \in \mathcal{L}_q$. We want to show that the minimal filtering path query of $w$ is equivalent to $q$. We remark that the minimal filtering path query contains the atom $r(a,x)$ or $r^-(x, a)$. Hence, the answers of the given query always include the answers of the minimal filtering path query, and we only need to show the converse direction.
    
    Let $I$ be a database instance satisfying the inclusion dependencies $\mathcal{UID}$ and let $r(a,b) \in I$ (we suppose such an atom exists, otherwise the result is vacuous). Let $q'(a,x)$ be the head atom of the minimal filtering path query. It is sufficient to show that $q'(a,b)$ is an answer of the minimal filtering path query to prove the equivalence.  We proceed by structural induction. Let $w \in \mathcal{L}_q$.  Let us consider a bottom-up construction of the word.
    The last rule can only be one of the Rule~\ref{ag2} or the Rule~\ref{ag4}.
    If it is Rule~\ref{ag2}, then $\exists r_1, \ldots, r_n \in \mathcal{R}$ such that $w=r_1 \ldots r_n r$ and $B_r \xrightarrow[]{*} r_1 \ldots r_n$. By applying Property~\ref{B-rule}, we know that $r_1 \ldots r_n(a,a)$ has an embedding in $I$. Hence, $q'(a,b)$ is an answer.
    If the rule is Rule \ref{ag4}, then
    $\exists r_1, \ldots, r_n, \ldots, r_m\in \mathcal{R}$
    such that $w= r_1 \ldots r_n r r_{n+1} \ldots r_m r^-$,
    $B_{r} \xrightarrow[]{*} r_1 \ldots r_n$ and $B_{r^-} \xrightarrow[]{*} r_{n+1} \ldots r_m$. By applying Property~\ref{B-rule} for the two derivations, and remembering that we have $r(a, b)$ and $r^-(b, a)$ in $I$, we have that $r_1 \ldots r_n(a,a)$ and $r_{n+1} \ldots r_m(b,b)$ have an embedding in $I$.  Hence, also in this case, $q'(a,b)$ is an answer. We conclude that $q'$ is equivalent to $q$.
    
    Reciprocally, let us suppose that we have a minimal filtering path query of a path query of skeleton $w$, which is equivalent to $q$, and that $q'(a, x)$ is its head atom.
    We can write it either $q'(a, x) \leftarrow r_1(a, x_1), r_2(x_1, x_2),... , r_n(x_{n-1}, a) r(a, x)$ or $q'(a, x) \leftarrow r_1(a, x_1), r_2(x_1, x_2),... ,$ $r_n(x_{n-1}, x), r^-(x, a)$. 
    In the first case, as $q'$ is equivalent to $q$, we have $r_1...r_n(a, a)$ which is true on all databases $I$ such that $I$ contains a tuple $r(a, b)$. So, according to Property~\ref{B-rule}, $B_r \xrightarrow[]{*} r_1 ... r_n$, and using Rule~\ref{ag2}, we conclude that $r_1...r_n.r$ is in $\mathcal{L}_q$. In the second case, for similar reasons, we have $B_r \xrightarrow[]{*} r_1 ... r_n r^-$. The last $r^-$ was generated by Rule~\ref{ag7}, using a non-terminal $L_r$ which came from Rule~\ref{ag5} using the trivial UID $r \leadsto r$. So we have $B_r \rightarrow B_r L_r \rightarrow B_r r B_{r^-} r^- \xrightarrow[]{*} r_1 ... r_n r^-$. We recognize here Rule~\ref{ag4} and so $r_1 ... r_n r^- \in \mathcal{L}_q$. This shows the second direction of the equivalence, and concludes the proof.
\end{proof}

\subsubsection{Linking the path transformation to $\mathcal{L}_q$.}
\label{proof-thm-capturing}

In the previous part, we have shown how equivalent queries relate to the context-free grammar $\mathcal{G}_q$ in the case of minimal filtering path queries. We are now going to show how the full path transformation relates to the language $\mathcal{L}_q$ of $\mathcal{G}_q$, and more precisely, we will observe that the path transformation leads to a minimal filtering path query of a word in $\mathcal{L}_q$.

The path transformation operates at the level of the semantics for each function call, transforming the original tree-shaped semantics into a word. %
What we want to show is that after the path transformation, we obtain a minimal filtering path query equivalent to $q$ iff the original execution path was an equivalent rewriting.
To show this, we begin by a lemma:

\begin{Lemma}\label{last_is_r}
   Let $\pi_a$ a minimal filtering non-redundant execution plan. The query $\mathcal{P}'(\pi_a)(x)$ is a minimal filtering path query and it ends either by $r(a, x)$ or $r^-(x, a)$, where $x$ was the variable name of the output of $\pi_a$.
\end{Lemma}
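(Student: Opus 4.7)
The proof will establish two things about $\mathcal{P}'(\pi_a)$: (i) it is a path query in the sense of Definition~\ref{skeleton} (i.e.\ its body is a chain $r_1(a,x_1),\dots,r_N(x_{N-1},x_N)$ where consecutive atoms share exactly one variable, and $a$ is used as the input), and (ii) it carries precisely the filter structure of a \emph{minimal filtering} path query (Definition~\ref{def-loop-query}), meaning the last atom is $r(a,x)$ or $r^-(x,a)$ where $x$ is the output of the plan. I would treat these two points separately.

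For (i), I would use Property~\ref{property-call-sequence} to write $\pi_a$ as a chain of calls $c_0,\ldots,c_k$, with $c_0$ having input $a$ and each $c_{i+1}$ taking as input an output variable of $c_i$. Then, for each individual call, inspect its sub‑semantics as constructed in step~2 of Definition~\ref{def:pathtransformation}: the body $r_1(y_1,y_2),\ldots,r_m(y_m,y_{m+1}),r_m^-(y_{m+1},y_m'),\ldots,r_{i_j}^-(y_{i_j+1}',y_{i_j})$ is already a path query by construction, starting at $y_1$ (the input of the call) and ending at $y_{i_j}$ (the variable passed to the next call, or the plan output). Concatenating these sub‑semantics (step~3) preserves the path shape because the ending variable of $c_i$'s sub‑semantics is \emph{exactly} the starting variable of $c_{i+1}$'s sub‑semantics, by Property~\ref{property-call-sequence}. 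Hence after step~3 we already have a path query whose first variable is~$a$ and whose terminal variable is the output $x$ of the plan.

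For (ii), I would invoke the fact that $\pi_a$ is minimal filtering, hence well-filtering (Definition~\ref{def:wellfiltering}), so its semantics contains an atom of the form $r(a,x)$ or $r^-(x,a)$, with $x$ the output variable of the plan. Because $x$ only appears in the last call $c_k$, this distinguished atom must lie in the sub‑semantics of $c_k$. Step~4 of the transformation splits exactly on these two cases. In the first case, the atom $r(a,x')$ (for some variable or constant $x_a$) sits inside the forward part of $c_k$'s sub‑semantics and the transformation appends the atom $r^-(x,a)$ as a filtered atom, so the resulting path query ends with $r^-(x,a)$ and $x$ remains the output. In the second case, the sub‑semantics already ends in an atom whose last variable is transformed to $a$, giving a final atom $r(x_1,a)$; but since this atom arose from an $r^-(x,a)$ in the well‑filtering plan with $x$ the plan output, inspection shows that after the renaming we obtain exactly an atom $r(a,x)$ (written with the correct orientation) as the last atom. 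In both cases the final atom is $r(a,x)$ or $r^-(x,a)$, and the only filter replaces an internal variable by~$a$, matching Definition~\ref{def-loop-query}.

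The main obstacle I anticipate is the bookkeeping in step~4: one has to verify carefully that the atom $r(a,x)$ or $r^-(x,a)$ guaranteed by the well‑filtering assumption actually appears as the \emph{terminal} atom of the sub‑semantics of the last call after step~2 builds the forward–backward expansion, and not somewhere in the middle of that expansion. This requires noting that when $c_k$ is expanded, the output variable $x$ that is the plan output corresponds to the terminal variable $y_{i_j}$ of the call, so in the forward pass any occurrence of $r(\cdot,x)$ sits at the position just before the backward pass begins, and the transformation of step~4 places the corresponding filtered atom at the very end of the concatenation. Once this placement is verified, conditions (i) and (ii) combine to yield exactly a minimal filtering path query whose last atom is $r(a,x)$ or $r^-(x,a)$, concluding the proof.
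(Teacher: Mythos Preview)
Your approach is essentially the same as the paper's, which is much terser: the paper simply says ``by construction'' for the path-query shape and then does the two-case analysis on step~4 of Definition~\ref{def:pathtransformation}. Your expanded treatment of part~(i) is fine.

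There is, however, a small slip in your case~2 of part~(ii). In the second case of step~4, the sub-semantics of the last call ends with the atom $r(x_1,x)$, and it is $x_1$ (the \emph{first} argument of that last atom, i.e.\ the penultimate variable of the whole chain) that is replaced by~$a$, yielding $r(a,x)$ directly. You write that ``the last variable is transformed to~$a$, giving a final atom $r(x_1,a)$'', which is the wrong substitution; you then recover the correct conclusion $r(a,x)$ only by an unspecified ``inspection''. Fix the bookkeeping here and the argument goes through cleanly.

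Your ``main obstacle'' paragraph is also slightly off-target. The atom $r(a,x)$ or $r^-(x,a)$ coming from the well-filtering assumption need \emph{not} be the terminal atom of $c_k$'s sub-semantics after step~2; in case~1 the sub-semantics ends with some $r_2^-(x_2,x)$ and only the appended $r^-(x,a)$ makes the path end correctly, while in case~2 the backward pass produces a terminal atom $r(x_1,x)$ on which the renaming acts. So the real point is not that the distinguished atom already sits at the end, but that step~4 \emph{manufactures} the correct terminal atom in each case.
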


\begin{proof}
    By construction, $\mathcal{P}'(\pi_a)(x)$ is a minimal filtering path query. Let us consider its last atom.
    In the case where the original filter on the constant $a$ created an atom $r(a, x)$, then the result is clear: an atom $r^-(x, a)$ is added.
    Otherwise, it means the original filter created an atom $r^-(x, a)$. Therefore, as observed in the last point of the path transformation, the last atom is $r(a, x)$, where we created the new filter on $a$.
\end{proof}

The property about the preservation of the equivalence is expressed more formally by the following property:

\begin{restatable}{Property}{propertyEquivalencePathTransformation}\label{prop:equivalence-path-transformation}
    Let us consider a query $q(x) \leftarrow r(a, x)$, a set of unary inclusion dependencies $\mathcal{UID}$, a set of path functions $\mathcal{F}$ and a minimal filtering non-redundant execution plan $\pi_a$ constructed on $\mathcal{F}$.
    Then, $\pi_a$ is equivalent to $q$ iff the minimal filtering path query $\mathcal{P}'(\pi_a)$ is equivalent to $q$.
\end{restatable}

\begin{proof}
    First, we notice that we have $\pi_a(I) \subseteq q(I)$ and $\mathcal{P}'(\pi_a)(I) \subseteq q(I)$ as $r(a, x)$ or $r^-(a, x)$ appear in the semantics of $\pi_a(x)$ and in $\mathcal{P}'(\pi_a)(x)$ (see Lemma~\ref{last_is_r}). So, it is sufficient to prove the property on the canonical database $I_0$ obtained by chasing the single fact $r(a, b)$ with $\mathcal{UID}$.
    
    We first show the forward direction and suppose that $\pi_a$ is equivalent to $q$. Then, its semantics has a single binding on $I_0$. Let us consider the $i^{th}$ call $c_i(\underline{x_1}, ..., x_j,...x_n)$ ($x_j$ is the output used as input by another function or the output of the plan) in $\pi_a$ and its binding: $r_1(y_1, y_2), ..., r_k(y_k, y_{k+1}), ..., r_m(y_m, y_{m+1})$. Then $r_1(y_1, y_2),$ $...,$ $r_{k-1}(y_{k-1}, y_k), r_k(y_k, y_{k+1}),$ $...,$ $r_m(y_m, y_{m+1}), r_m^-(y_{m+1}, y_m),$ $...,$ $r_k^-(y_{k+1},$ $y_{k})$ is a valid binding for the sub-semantics, as the reversed atoms can be matched to the same atoms than those used to match the corresponding forward atoms.
    Notice that the last variable is unchanged. So, in particular, the variable named $x$ (the output of $\pi_a$) has at least a binding in~$I_0$ before at step 3 of the path transformation.
    In step 4, we have two cases.
    In the first case, we add $r^-(x, a)$ to the path semantics. Then we still have the same binding for $x$ as $\pi_a$ is equivalent to $q$.
    In the second case, we added a filter in the path semantics, and we still get the same binding.
    Indeed, by Property~\ref{ppt:uidchase}, $b$ has a single ingoing $r$-fact in $I_0$, which is $r^-(b, a)$. This observation means that, in the binding of the path semantics, the penultimate variable was necessary $a$ on $I_0$.
    We conclude that $\mathcal{P}'(\pi_a)$ is also equivalent to $q$.

    We now show the backward direction and suppose $\mathcal{P}'(\pi_a)$ is equivalent to $q$. Let us take the single binding of $\mathcal{P}'(\pi_a)$ on $I_0$. Let us consider the sub-semantics of a function call $c_i(\underline{x_1}, ..., x_j,...x_n)$ (where $x_j$ is the output used as input by another function or the output of the plan): $r_1(y_1, y_2),$ $...,$ $r_{k-1}(y_{k-1}, y_k'), r_k(y_k', y_{k+1}'),$ $...,$ $r_m(y_m', y_{m+1}), r_m^-(y_{m+1}, y_m),$ $...,$ $r_k^-(y_{k+1},$ $y_{k})$. As all path queries have at most one binding on $I_0$, we necessarily have $y_k = y_k'$, ..., $y_m = y_m'$. Thus the semantics of $\pi_a$ has a binding on $I$ which uses the same values than the binding of $\mathcal{P}'(\pi_a)$. In particular, the output variables have the same binding.
    We conclude that $\pi_a$ is also equivalent to $q$.
\end{proof}

We can now apply Corollary~\ref{lemma-fwd-backward} on $\mathcal{P}'(\pi_a)$ as it is a minimal filtering path query : $\mathcal{P}'(\pi_a)$ is equivalent to $q$ iff $\mathcal{P}(\pi_a)$ is in $\mathcal{L}_q$. 
So, $\pi_a$ is equivalent to $q$ iff $\mathcal{P}(\pi_a)$ is in $\mathcal{L}_q$.

We conclude that $\mathcal{L}_q$ is a capturing language for $q$.
As our grammar $\mathcal{G}_q$ for~$\mathcal{L}_q$ can be constructed in PTIME, this concludes the proof of Theorem~\ref{thm-main-result}. 

\subsection{Proof of Theorem~\ref{thm-faithful}}
\label{proof-thm-faithful}

\thmfaithful*

\begin{proof}
    We first prove the forward direction: every word $w$ of the language of~$P_r$ is achieved as the image by the full path transformation of a minimal filtering non-redundant plan.
    To show this, let $w$ be a word in the language of $P_r$.
    We first suppose we can decompose it into its elements from $W$ and $W'$: $w = w_{f_1, i_1} ... w_{f_{n-1}, i_{n-1}}.w_{f_n, i_n}$ with $w_{f_n, i_n}$ being final.
    Let $\pi_a$ be composed of the successive function calls $f_1, \ldots, f_n$ where the input of $f_1$ is the constant $a$ and the input of $f_k$ ($k > 1$) is the $i_{k-1}^{th}$ variable of $f_{k-1}$. For the output variable and the filter, we have two cases:
    \begin{enumerate}
        \item  If the last letter of $w_{f_n, i_n}$ is $r$, the output of the plan is the $i_n^{th}$ variable of $f_n$ and we add a filter to~$a$ on the $(i_n+1)^{th}$ variable.
        \item Otherwise, if the last letter of $w_{f_n, i_n}$ is $r^-$, the output of the plan is the $(i_n+1)^{th}$ variable of $f_n$ and we add a filter to~$a$ on the $i_n^{th}$ variable (except if this is the input, in which case we do nothing).
    \end{enumerate}
    We notice that $\pi_a$ is non-redundant. Indeed, by construction, only the first function takes $a$ as input, and all functions have an output used as input in another function. The added filter cannot be on the input of a function as $i_n > 0$.
    What is more, $\pi_a$ is also a minimal filtering plan. Indeed, by construction, we create an atom $r(a, x)$ or an atom $r^-(x, a)$ (with $x$ the output of the plan). Let us show that it is the last possible filter. If we created $r^-(x, a)$, it is obvious as $x$ cannot be used after that atom. If we created $r(a, x)$, we know we could not have a following atom $r^-(x, y)$ where one could have filtered on $y$: this is what is guaranteed by the third point of the definition of the final $w_{f,i}$.

    The only remaining point is to show that $\mathcal{P}(\pi_a) = w$. Indeed, for $k < n$, we notice that $w_{f_k, i_k}$ is the skeleton of the sub-semantics of the $k^{th}$ call in $\pi_a$. What is less intuitive is what happens for the last function call.
    
    Let us consider the two cases above. In the first one, the output variable is the $i_n^{th}$ variable of $f_n$. We call it $x$. The semantics of $\pi_a(x)$ contains $r_{i_n+1}(x, a) = r^-(x, a)$ (as the last letter of $w_{f_n, i_n}$ is $r$). We are in the second point of step 4 of the path transformation. The skeleton of the end of the path semantics is not modified and it is $w_{f_n, i_n}$.%
    
    In the second case, the output variable is the $(i_n + 1)^{th}$ variable of $f_n$. We call it $x$. The semantics of $\pi_a(x)$ contains $r_{i_n+1}(a, x) = r(a, x)$ (as the last letter of $w_{f_n, i_n}$ is $r^-$). We are in the first point of step 4 of the path transformation. The skeleton of the end of the path semantics is modified to append $r^-$ and it is now $w_{f_n, i_n + 1} r = w_{f_n, i_n}$ as expected.

    This establishes that $\mathcal{P}(\pi_a) = w$ in the case where $w$ can be decomposed as elements of $W$ and $W'$.
    Otherwise, $w$ is in the language of $W_0$, so $w=w_{f,0}$ where $w_{f,0}$ is final, 
    ends by $r^-$, thus starts by $r$. We define $\pi_a$ as the execution plan composed of one function call $f$, which takes as input $a$. The output of the plan is the first output variable of the function. The plan $\pi_a$ is non-redundant as it contains only one function call. It is also minimal filtering. Indeed, by definition of $w_{f,0}$, the first output variable is on the first atom.
    So, the semantics of $\pi_a$ contains an atom $r(a, x)$ where $x$ is the output variable. Besides, it does not contain an atom $r^-(x, y)$ where $y$ is an output of the $f$ by the third point of the definition of a final $w_{f,i}$.
    
    Finally, we have $\mathcal{P}(\pi_a) = w$, and this concludes the first implication. The transformation that we have here is what was performed by the transducer and the method presented in Section~\ref{transducer}. The only difference is that the technique in Section~\ref{transducer} will consider all possible ways to decompose $w$ into $w_{f, i}$ and into final $w_{f, i}$ to get all possible non-redundant minimal filtering plans.

    We now show the converse direction of the claim: the full path transformation maps non-redundant minimal filtering plans to words in the language of~$P_r$.
    Suppose that we have a non-redundant minimal filtering plan $\pi_a$ such that $\mathcal{P}(\pi_a) = w$ and let us show that $w$ is in the language of~$P_r$.
    For all calls which are not the last one, it is clear that the sub-semantics of these calls are the $w_{f, i_k}$ with $i_k > 0$ (as the plan is non-redundant).
    So the words generated by the calls that are not the last call are words of the language of~$W^*$.
    
    For the last function call, we have several cases to consider.
    
    First, if $\pi_a(x)$ contains a filter, then it means that either the first atom in the semantics of $\pi_a(x)$ is not $r(a, x)$ or, if it is, it is followed by an atom $r^-(x, a)$. 
    
    If we are in the situation where the semantics of $\pi_a(x)$ starts by $r(a, x), r^-(x, a)$, then $\pi_a$ is composed of only one function call $f$ (otherwise, it would be redundant). Then, it is clear that $\mathcal{P}(\pi_a) = w_{f, 1}$ with $w_{f,1}$ being final, and we have the correct form.
    
    If we are in the situation where the semantics of $\pi_a(x)$ does not start by $r(a, x)$, we have the two cases (corresponding to the two cases of the forward transformation). We suppose that the last function call is on $f$, and the output variable is the $i^{th}$ one in $f$.
    
    If $\pi_a$ does not contain an atom $r(a, x)$, then it contains an atom $r^-(x, a)$ and the result is clear: the skeleton of the path semantics is not modified and ends by the sub-semantics of $f$ whose skeleton is $w_{f, i}$ and has the correct properties: the last atom is $r$, the variable after $x$ is not existential (it is used to filter) and the atom after $r(a, x)$ cannot be $r^-(x, y)$ with $y$ an output variable of $f$ as $\pi_a$ is minimal filtering.
    
    If $\pi_a$ contains an atom $r(a, x)$, then in the definition of the path transformation, we append an atom $r^-(x, a)$ after the sub-semantics of the $f$. We then have the path semantics ending by the atom names $w_{f, i}.r^- = w_{f, i-1}$ and $w_{f, i-1}$, which is final, has the adequate properties.
    
    This shows that $w$ is in the language of~$P_r$ in the case $\pi_a$ has a filter because the word generated by the last call is in~$W'$.
    
    Now, we consider the case when $\pi_a$ does not have a filter. It means that the semantics of $\pi_a$ starts by $r(a, x)$ and is not followed by an atom $r^-(x, y)$ where $y$ is the output of a function (as $\pi_a$ is well-filtering). Then, $\pi_a$ is composed of only one function call $f$ and it is clear that $\mathcal{P}(\pi_a) = w_{f, 0}$ which is final. So, in this case, the word $w$ belongs to~$W_0$.
    
    So, $w$ is in the language of~$P_r$ in the case $\pi_a$ does not have a filter. This concludes the proof of the second direction, which establishes the property.
\end{proof}

\section{Proofs for Section~\ref{sec:problem} and~\ref{sec:thealgorithm}}
\label{apx:main-text-proofs}

In this section, we give the missing details for the proof of the claims given in the main text, using the results from the previous appendices. We first cover 
in Appendix~\ref{apx-algodetails}
the missing details of our algorithm. We then show Theorem~\ref{thm-algorithm} in Appendix~\ref{apx:algorithmproof}, and show Proposition~\ref{proposition-verification} in Appendix~\ref{apx:verificationproof}.

\subsection{Details for our algorithm}
\label{apx-algodetails}

We now make more precise the last steps of our algorithm, which were left unspecified in the main text:

\begin{itemize}
    \item Building all possible execution plans $\pi_a(x)$ from a word $w$ of $\mathcal{G}$: this is specifically done by taking all preimages of $w$ by the path transformation, which is done as shown in  Property~\ref{prop-transformation}. Note that these are all minimal filtering plans by definition.
    \item Checking subsets of variables on which to add filters: for each minimal filtering plan, we remove its filter, and then consider all possible subsets of output variables where a filter could be added, so as to obtain a well-filtering plan which is equivalent to the minimal filtering plan that we started with. (As we started with a minimal filtering plan, we know that at least some subset of output variables will give a well-filtering plan, namely, the subset of size 0 of 1 that had filters in the original minimal filtering plan.) The correctness of this step is because we know by Lemma~\ref{equivalent_r_atom} that non-redundant equivalent plans must be well-filtering, and because we can determine using Theorem~\ref{thm-well-filtering-minimal-construction} if adding filters to a set of output variables yields a plan which is still an equivalent rewriting. 
\end{itemize}

\subsection{Proof of Theorem~\ref{thm-algorithm}}
\label{apx:algorithmproof}

In this appendix, we show our main theorem:

\thmalgorithm*

We start by taking the grammar $\mathcal{G}_q$ with language $\mathcal{L}_q$ used in Theorem~\ref{thm-main-result} and defined in Definition~\ref{def-grammar} and the regular expression $P_r$ used in Theorem~\ref{thm-faithful} and defined in Definition~\ref{possible-plan}. We make the following easy claim:

\begin{Property}
\label{prop-intersection}
$\mathcal{L}_q \cap P_r$ is a capturing language that faithfully represents plans, and it can be constructed in PTIME.
\end{Property}

\begin{proof}
    By construction, $P_r$ represents all possible skeletons obtained after a full path transformation (Theorem~\ref{thm-faithful}).
    
    So, as $P_r$ represents all possible execution plans, and as $\mathcal{L}_q$ is a capturing language (proof of Theorem~\ref{thm-main-result}), then  $\mathcal{L}_q \cap P_r$ is a capturing language.
    
    The only remaining part is to justify that it can be constructed in PTIME. First, observe that the grammar $\mathcal{G}_q$ for $\mathcal{L}_q$, and the regular expression for~$P_r$, can be computed in PTIME. Now, to argue that we can construct in PTIME a context-free grammar representing their intersection, we will use the results of \cite{IndexGrammarHopcroft} (in particular, Theorem~7.27 of the second edition). First, we need to convert the context-free grammar $\mathcal{G}_q$ to a push-down automaton accepting by final state, which can be done in PTIME. Then, we turn $P_r$ into a non-deterministic automaton, which is also done is PTIME. Then, we compute a push-down automaton whose language is the intersection between the push-down automaton and the non-deterministic automaton using the method presented in~\cite{IndexGrammarHopcroft}. This method is very similar to the one for intersecting two non-deterministic automata, namely, by building their product automaton. This procedure is done in PTIME. In the end, we obtain a push-down automaton that we need to convert back into a context-free grammar, which can also be done in PTIME. So, in the end, the context-free grammar $\mathcal{G}$ denoting the intersection of $\mathcal{L}_q$ and of the language of~$P_r$ can be constructed in PTIME. This concludes the proof.
\end{proof}

So let us now turn back to our algorithm and show the claims. By Property~\ref{prop-intersection}, we can construct a grammar for the language $\mathcal{L}_q \cap P_r$ in PTIME, and we can then check in PTIME if the language of this new context-free grammar is empty. If it is the case, we know that is no equivalent plan. Otherwise, we know there is at least one.
We can thus generate a word $w$ of the language of the intersection -- note that this word is not necessarily of polynomial-size, so we do not claim that this step runs in PTIME. Now, as $P_r$ faithfully represents plans (Theorem~\ref{thm-faithful}), we deduce that there exists an execution plan $\pi_a$ such that $\mathcal{P}(\pi_a) = w$, and from Property~\ref{prop-transformation}, we know we can inverse the path transformation in PTIME to gut such a plan.

To get all plans, we enumerate all words of $\mathcal{L}_q \cap P_r$: each of them has at least one equivalent plan in the preimage of the full path transformation, and we know that the path transformation maps every plan to only one word, so we never enumerate any duplicate plans when doing this. Now, by Property~\ref{prop-transformation}, for any word $w \in \mathcal{L}_q \cap P_r$, we can list all its preimages by the full path transformation; and for any such preimage, we can add all possible filters, which is justified by Theorem~\ref{thm-well-filtering-minimal-construction} and Property~\ref{B-rule}.
That last observation establishes that our algorithm indeed produces precisely the set of non-redundant plans that are equivalent to the input query under the input unary inclusion dependencies, which allows us to conclude the proof of Theorem~\ref{thm-algorithm}.

\subsection{Proof of Proposition~\ref{proposition-verification}}
\label{apx:verificationproof}

\propositionverification*

First, we check if $\pi_a$ is well-filtering, which can easily be done in PTIME. If not, using Lemma~\ref{equivalent_r_atom} we can conclude that $\pi_a$ is not an equivalent rewriting. Otherwise, we check if $\pi_a$ is equivalent to its associated minimal filtering plan. This verification is done in PTIME, thanks to Theorem~\ref{thm-well-filtering-minimal-construction}. If not, we know from Theorem~\ref{thm-well-filtering-minimal} that $\pi_a$ is not an equivalent rewriting. Otherwise, it is sufficient to show that $\pi_a^{min}$ is an equivalent rewriting. To do so, we compute $w = \mathcal{P}(\pi_a^{min})$ in PTIME and check if $w$ is a word of the context-free capturing language defined in Theorem~\ref{thm-main-result}. This verification is done in PTIME. By Theorem~\ref{thm-main-result}, we know that $w$ is a word of the language iff $\pi_a$ is an equivalent rewriting, which concludes the proof.

}

\end{document}